\documentclass[10pt]{article}

    \usepackage{amsmath}
    \usepackage{amsthm}
    \usepackage{amssymb}
    \usepackage{graphicx}
    \usepackage{natbib}

    \usepackage[utf8]{inputenc}
    \usepackage[british]{babel}
    \usepackage{mathpartir}
    \usepackage{thmtools}
    \usepackage{thm-restate}
    \usepackage{ifthen}
    \usepackage{caption}
    \usepackage{subcaption}
    \usepackage{url}        

    \usepackage{genMath}
    \usepackage{collections}
    \usepackage{separationLogic}
    \usepackage{obligations}
    \usepackage{notes}
    
    \usepackage{syntax}
    \usepackage{threadPools}
    \usepackage{variables}
    \usepackage{reduction}
    \usepackage{safety}
    \usepackage{proofRules}

\newtheorem{mytheorem}{Theorem}[section]
\newtheorem{mydefinition}[mytheorem]{Definition}

\newtheorem{myobservation}[mytheorem]{Observation}

\title{
    A Separation Logic to Verify Termination of Busy-Waiting for Abrupt Program Exit: \\
    Technical Report
}

\author{
        Tobias Reinhard\\
        KU Leuven
        \and 
        Amin Timany\\
        Aarhus University
        \and 
        Bart Jacobs\\
        KU Leuven
}




\begin{document}

    \NewDocumentCommand{\RRedRuleNameSet}{}{\ensuremath{N_\keywordFont{r}}\xspace}
    \NewDocumentCommand{\progOrdGraph}{}{\ensuremath{\mathcal{G}}\xspace}
    \NewDocumentCommand{\progOrdGraphConfig}{}{\ensuremath{g}\xspace}
    \NewDocumentCommand{\progOrdGraphLF}{}
            {\ensuremath{\progOrdGraph_{\fixedPredNameFont{lf}}}\xspace}
    \NewDocumentCommand{\nodesOf}{m}
            {\ensuremath{\fixedFuncNameFont{nodes}(#1)}\xspace}
    \NewDocumentCommand{\edgesOf}{m}
            {\ensuremath{\fixedFuncNameFont{edges}(#1)}\xspace}
    \NewDocumentCommand{\leavesOf}{m}
            {\ensuremath{\fixedFuncNameFont{leaves}(#1)}\xspace}

    \NewDocumentCommand{\sumOb}{m}{\ensuremath{S_o(#1)}\xspace}
    \NewDocumentCommand{\sumCred}{m}{\ensuremath{S_c(#1)}\xspace}

\maketitle

\begin{abstract}
    Programs for multiprocessor machines commonly perform busy-waiting for synchronisation. 
    In this paper, we make a first step towards proving termination of such programs. 
    We approximate 
    (i)~arbitrary waitable events by abrupt program termination and 
    (ii)~busy-waiting for events by busy-waiting to be abruptly terminated.
    
    We propose a separation logic for modularly verifying termination (under fair scheduling) of programs where some threads eventually abruptly terminate the program, and other threads busy-wait for this to happen.
\end{abstract}

\tableofcontents

\section{Introduction}\label{sec:Introduction}

    Programs for multiprocessor machines commonly perform busy-waiting for synchronisation~\cite{Mhlemann1980MethodFR, MellorCrummey1991AlgorithmsFS, Rahman2012ProcessSI}. 
    In this paper, we make a first step towards proving termination of such programs. 
    Specifically, we propose a separation logic~\cite{Reynolds2002SeparationLA, OHearn2001LocalRA} for modularly verifying termination (under fair scheduling) of programs where some threads eventually abruptly terminate the program, and other threads busy-wait for this to happen.

    Here, by modular we mean that we reason about each thread and each function in isolation.
    That is, we do not reason about thread scheduling or interleavings.
    We only consider these issues when proving the soundness of our logic.

    In this work, we approximate 
    (i)~arbitrary events that a program might wait for by abrupt termination and 
    (ii)~busy-waiting for events by busy-waiting to be abruptly terminated. 
    In Section~\ref{sec:FutureWork}, we sketch preliminary ideas for generalizing this to verifying termination of busy-waiting for arbitrary events, and how this work may also be directly relevant to verifying liveness properties of a program’s I/O behaviour.
    
    Throughout this paper we use a very simple programming language to illustrate our verification approach.
    Its simplicity would allow us to verify termination of busy-waiting for abrupt termination via a static analysis significantly simpler than the proposed separation logic.
    However, in contrast to such an analysis, our approach is also applicable to realistic languages.
    Furthermore, we are confident that the logic we propose can be combined with existing concurrent separation logics like Iris~\cite{Jung2018IrisFT} to verify termination of busy-waiting.

    We start by introducing the programming language in Section~\ref{sec:Language} and continue in Section~\ref{sec:Logic} with presenting the separation logic and so-called \emph{obligations} and \emph{credits}~\cite{Hamin2019TransferringOT, Hamin2018DeadlockFreeM, Leino2010DeadlockFreeCA, Kobayashi2006ANT}, which we use to reason about termination of busy-waiting.
    In Section~\ref{sec:ProofSystem} we present our verification approach in the form of a set of proof rules and illustrate their application.
    Afterwards, we prove the soundness of our proof system in Section~\ref{sec:Soundness}.
    We conclude by outlining our plans for future work, comparing our approach to related work and reflecting on our approach in Sections~\ref{sec:FutureWork}, \ref{sec:RelatedWork} and ~\ref{sec:Conclusion}.

\section{The Language}\label{sec:Language}
    We consider a simple programming language with an \cmdExit command that abruptly terminates all running threads, a \cmdFork command, a looping construct \cmdLoop[\cmdSkip] to express infinite busy-waiting loops and sequencing $\cmdVar_1; \cmdVar_2$.

    \begin{mydefinition}[Commands and Continuations]
        We denote the sets of commands \cmdVar and continuations \contVar as defined by the grammar presented in Figure~\ref{fig:Syntax} by~\CmdSet and~\ContSet.
        We consider sequencing $\cdot\mathop{;}\cdot$ as defined in the grammars of commands and continuations to be right-associative.
    \end{mydefinition}
    
    We use commands and continuations to represent programs and single threads, respectively, as well as natural numbers for thread IDs.
    Continuation \contDone marks the end of a thread's execution.
    We consider thread pools to be functions mapping a finite set of thread IDs to continuations.
        
    \begin{figure}
        $$
        \begin{array}{r c l}
            \cmdVar \in \CmdSet
                    &::=&
                            \cmdExit ~|~
                            \cmdLoop[\cmdSkip] ~|~
                            \cmdFork[\cmdVar] ~|~
                            \cmdVar\ ;\, \cmdVar
            \\
            \contVar \in \ContSet
                    &::=&
                            \contDone   ~|~   \cmdVar\ ;\, \contVar
        \end{array}    
        $$
        \vspace{-0.3cm}
        \caption{Syntax}
        \label{fig:Syntax}
    \end{figure}
    
    \begin{mydefinition}[Thread Pools]
            We define the set of thread pools \ThreadPoolSet as follows
            $$
                \ThreadPoolSet\ := \
                        \{  \tpVar : \tpDomVar \rightarrow \ContSet  
                                \, \ | \, \
                                \tpDomVar \finSubset \N
                        \}.
            $$
            We denote thread pools by \tpVar, thread IDs by \tidVar and the empty thread pool by 
            $\tpEmpty: \emptyset\rightarrow\ContSet$.
        \end{mydefinition}

    \begin{mydefinition}[Thread Pool Extension]\label{def:ThreadPoolExtension}
        Let $\tpVar:\tpDomVar\rightarrow\ContSet \in \ThreadPoolSet$ be a thread pool. We define:
        \begin{itemize}
            \item $\tpVar \tpExt \emptyset := \tpVar$,
            \item 
                    $\tpVar \tpExt \setOf{\contVar} :
                        \tpDomVar\,\cup\,\setOf{\max(\tpDomVar)+1}\rightarrow\ContSet$ 
                    with\\ 
                    $(\tpVar\tpExt\setOf{\contVar})(\tidVar) = P(\tidVar)$ for all $\tidVar\in\tpDomVar$ and \\
             $(\tpVar\tpExt\setOf{\contVar})(\max(\tpDomVar)+1) = \contVar$ ,
            \item 
                    $\tpVar \tpRem \tidVar' : 
                        \tpDomVar \setminus \setOf{\tidVar'}\rightarrow \ContSet$
                    with\\
                    $(\tpVar \tpRem \tidVar')(\tidVar) = \tpVar(\tidVar)$
        \end{itemize}
    \end{mydefinition}
    
    We consider a standard small-step operational semantics for our language defined in terms of two reduction relations: (i)~\stRedStepSymb for single thread reduction steps and (ii)~\tpRedStepSymb for thread pool reduction steps.
    
    \begin{mydefinition}[Single-Thread Reduction Relation]
        We define a \emph{single-thread reduction relation} \stRedStepSymb according to the rules presented in Figure~\ref{fig:SingleThreadReductionRelation}.
        A reduction step has the form
        $$\stRedStep{\contVar}{\nextContVar}[\ftsVar]$$
        for a set of forked threads $\ftsVar \subset \ContSet$ with $\cardinalityOf{\ftsVar} \leq 1$.
    \end{mydefinition}

    \begin{figure}
        \begin{mathpar}
            \RedSTLoop
            \and
            \RedSTFork
            \and
            \RedSTSeq
        \end{mathpar}
        \caption{Reduction rules for single threads.}
        \label{fig:SingleThreadReductionRelation}
    \end{figure}

    \begin{mydefinition}[Thread Pool Reduction Relation]
        We define a \emph{thread pool reduction relation} \tpRedStepSymb according to the rules presented in Figure~\ref{fig:ThreadPoolReductionRelation}. 
        A reduction step has the form
        $$\tpRedStep{\tpVar}{\tidVar}{\nextTpVar}$$
        for a thread ID $\tidVar \in \domOf{\tpVar}$.
    \end{mydefinition}

    \begin{figure}
        \begin{mathpar}
            \RedTPLift
            \and
            \RedTPExit
            \and
            \RedTPThreadTerm
        \end{mathpar}
        \caption{Reduction rules for thread pools.}
        \label{fig:ThreadPoolReductionRelation}
    \end{figure}

    \paragraph{Termination Terminology}
    According to reduction rule \RedTPExitName, command \cmdExit terminates all running threads by clearing the entire thread pool.
    We call this \emph{abrupt termination} to differentiate it from \emph{normal termination} when a thread first reduces to \contDone and is then removed from the thread pool via rule \RedTPThreadTermName.
    The term \emph{termination} encompasses both \emph{abrupt} and \emph{normal} termination.

    Figure~\ref{fig:ExampleProgramCode} illustrates the type of programs we aim to verify.
    The code snippet spawns a new thread which will abruptly terminate the entire program and then busy-waits for the program to be terminated.
    The operational semantics defined above is non-deterministic in regard to when and if threads are scheduled.
    Meanwhile, the presented program only terminates if the exiting thread is eventually scheduled.
    Hence, we need to assume fair scheduling.

    \begin{figure}
            $$
                    \cmdFork[\cmdExit];
                    \cmdLoop[\cmdSkip]
            $$
            \caption{Example program with two threads: An exiting thread and one waiting for the program to be abruptly terminated.}
            \label{fig:ExampleProgramCode}
    \end{figure}

    \begin{mydefinition}[Reduction Sequence]\label{def:ReductionSequence}
        Let  \tpRSeq{\N} be a sequence of thread pools such that \tpRedStep{\tpVar_i}{\tidVar_i}{\tpVar_{i+1}} holds for all $i\in \N$ and some sequence $(\tidVar_i)_{i\in \N}$ of thread IDs. Then we call \tpRSeq{\N} a \emph{reduction sequence}.
    \end{mydefinition}

    Note that according to this definition, all reduction sequences are implicitly infinite.

    \begin{mydefinition}[Fairness]\label{def:FairInfiniteReduction}
        We call a reduction sequence \phantom{} \tpRSeq{\N} \emph{fair} iff for all $k \in \N$ and $\tidVar_k\in\domOf{\tpVar_k}$ 
         there exists $j\geq k$ such that 
        $$
            \tpRedStep{\tpVar_j}{\tidVar_k}{\tpVar_{j+1}}.
        $$
    \end{mydefinition}

\section{The Logic}\label{sec:Logic}
    
    In this paper, we develop a separation logic to reason about termination of busy-waiting programs.
    Separation logic is designed for reasoning about program resources as well as ghost resources~\cite{Reynolds2002SeparationLA, OHearn2001LocalRA}.
    The latter is information attached to program executions for the purpose of program verification, e.g., a resource tracking how many threads have access to a shared memory location~\cite{Jung2016HigherorderGS}.
    Here, we use ghost resources to track which thread will eventually \cmdExit, i.e., abruptly terminate the entire program.
    
    \paragraph{Obligations \& Credits}
    Remember that \cmdExit terminates all running threads.
    Therefore, in order to modularly reason about program termination we need information about other threads performing \cmdExit.
    
    For this purpose, we introduce two kinds of ghost resources: \emph{obligations} and \emph{credits}.
    Threads holding an obligation are required to discharge it by performing \cmdExit while threads holding a credit are allowed to busy-wait for another thread to \cmdExit.
    As seen in the next section we ensure that no thread (directly or indirectly) waits for itself.

    We aggregate obligations into obligations chunks, where each obligations chunk collects the held obligations of a single thread.

    \paragraph{Assertions}
    The language of assertions defined in the following allows us to express knowledge and assumptions about held obligations and credits.
    The language contains the standard separating conjunction $\cdot \slStar \cdot$ as well as two non-standard predicates \obsPred and \credit to express the possession of ghost resources.
    (i)~\obs{n} expresses the possession of one obligations chunk containing $n$ \cmdExit obligations; i.e., it expresses that the current thread holds $n$ exit obligations~\footnote{
        As outlined in Section~\ref{sec:FutureWork}, we plan to extend this logic to one where threads are obliged to set ghost signals.
        This makes it necessary to track the number of signals that remain to be set.
        Hence, we track the number of obligations.
    }.
    (ii)~\credit expresses the possession of an \cmdExit credit that can be used to busy-wait for another thread to \cmdExit.
    
    \begin{mydefinition}[Assertions]
        Figure~\ref{fig:AssertionsSyntax} defines the set of assertions \AssertionSet.
    \end{mydefinition}
    
    \begin{figure}
        $$
        \begin{array}{r c l} 
            \assVar \in \AssertionSet &::= 
                    &\slTrue ~|~ \slFalse ~|~ \assVar \slStar \assVar  ~|~ 
                    \obs{\natVar} ~|~ \credit\\
            \natVar \in \N
        \end{array}
        $$
        \vspace{-0.5cm}
        \caption{Syntax of assertions.}
        \label{fig:AssertionsSyntax}
        \vspace{-0.04cm}
    \end{figure}

    As we see in Section~\ref{sec:ProofSystem} it is crucial to our verification approach that the \obsPred-predicate captures a full obligations chunk and that this chunk can only be split when obligations are passed to a newly forked thread.
    We represent the information about the held obligations chunks and credits by resource bundles $(\obBagVar, \credVar)$.
    
    \begin{mydefinition}[Resource Bundles]
        We define the set of \emph{resource bundles} \ResourceBundleSet as
        $$\ResourceBundleSet\ \ :=\ \ \BagsOf{\N} \times \N.$$
        Let
        $(\obBagVar_1, \credVar_1), (\obBagVar_2, \credVar_2) \in \ResourceBundleSet$.
        We define
        $$
                \begin{array}{l c l}
                        (\obBagVar_1, \credVar_1)\ \tupleCup\
                        (\obBagVar_2, \credVar_2)
                        &:=
                        &(\obBagVar_1 \msCup \obBagVar_2, 
                            \credVar_1 + \credVar_2).
                 \end{array}
        $$
    \end{mydefinition}

    Threads hold exactly one obligations chunk, i.e., resources $(\obBagVar, \credVar)$ with $\cardinalityOf{\obBagVar} = 1$.
    We call such resource bundles \emph{\completeTerm}.
    
    \begin{mydefinition}[Complete Resource Bundles]
        We call a resource bundle $(\obBagVar, \credVar) \in \ResourceBundleSet$ \emph{\completeTerm} if $\cardinalityOf{\obBagVar} = 1$ holds and write \complete{(\obBagVar, \credVar)}.
    \end{mydefinition}
    
    Note that the following definition indeed ensures that the \obsPred-predicate captures a full obligations-chunk.
    Hence, no bundle with one obligations chunk can satisfy an assertion of the form $\obs{n} \slStar \obs{n'}$.

    \begin{mydefinition}[Assertion Model Relation]
        Figure~\ref{fig:AssertionModelRelation} defines the \emph{assertion model relation} $\assModelsSymb\ \subseteq\ \ResourceBundleSet\, \times\, \AssertionSet$.
        We write
        $$ \assModels{\rbVar}{\assVar}$$
        to express that resource bundle $\rbVar\in \ResourceBundleSet$ models assertion $\assVar \in \AssertionSet$.
    \end{mydefinition}

    \begin{figure}

        $$
        \begin{array}{r c l c l}
            \rbVar &\assModelsSymb &\slTrue
            \\
            \rbVar &\assModelsSymb &\assVar_1 \slStar \assVar_2 
                    &\text{iff}
                    &\exists \rbVar_1, \rbVar_2\in \ResourceBundleSet.\ 
                            \rbVar = \rbVar_1\tupleCup \rbVar_2\\
                    &&&&
                            \wedge\
                            \rbVar_1 \assModelsSymb \assVar_1\ \wedge\ 
                            \rbVar_2 \assModelsSymb \assVar_2
            \\
            (\obBagVar, \credVar) &\assModelsSymb &\obs{\obVar}
                    &\text{iff}
                    &\obVar \in \obBagVar
            \\
            (\obBagVar, \credVar) &\assModelsSymb &\credit
                    &\text{iff}
                    &\credVar \geq 1
        \end{array}
        $$
        \caption{Modeling relation for assertions.}
        \label{fig:AssertionModelRelation}
    \end{figure}

\section{Verifying Termination of Busy-Waiting}\label{sec:ProofSystem}
    In this section we present the proof system we propose for verifying termination of programs with busy-waiting for abrupt program exit and illustrate its application.
   Further, we present a soundness theorem stating that every program, which provably discharges all its \cmdExit obligations and starts without credits, terminates.

    \paragraph{Hoare Triples}
    We use Hoare triples \hoareTriple{\htPreConVar}{\cmdVar}{\htPostConVar}~\cite{Hoare1968HoareLogic} to specify the behaviour of programs.
    Such a triple expresses that given precondition \htPreConVar, command \cmdVar can be reduced without getting stuck and if this reduction terminates, then postcondition \htPostConVar holds afterwards.
    In particular, a triple \hoareTriple{\htPreConVar}{\cmdVar}{\slFalse} expresses that \cmdVar diverges or exits abruptly.

    \paragraph{Ghost Steps}
    When verifying the termination of a program \cmdVar, we consider it to start without any obligations or credits, i.e., \hoareTriple{\noObs}{\cmdVar}{\htPostConVar}.
    Obligation-credit pairs can, however, be generated during so-called \emph{ghost steps}.
    These are steps that exclusively exist on the verification level and only affect ghost resources, but not the program's behaviour~\cite{Jung2018IrisFT, Fillitre2016TheSO}.
    A credit can also be cancelled against an obligation.
    
    \paragraph{View Shift}
    In our proofs, we need to capture ghost steps as well as drawing conclusions from assertions, e.g., rewriting $\htPreConVar \slStar \htPostConVar$ into $\htPostConVar \slStar \htPreConVar$ and concluding \noObs from the assumption \slFalse.
    We ensure this by introducing a view shift relation~\viewShiftSymb~\cite{Jung2018IrisFT}.
    A view shift \viewShift{\htPreConVar}{\htPostConVar} expresses that whenever \htPreConVar holds, then either 
    (i)~\htPostConVar also holds or
    (ii)~\htPostConVar can be established by performing ghost steps.
    \lrViewShift{\htPreConVar}{\htPostConVar} stands for
    $\viewShift{\htPreConVar}{\htPostConVar}\, \wedge\,
    \viewShift{\htPostConVar}{\htPreConVar}$.

    \begin{mydefinition}[View Shift]\label{def:ViewShiftRelation}
        We define the view shift relation $\viewShiftSymb\ \subset \AssertionSet \times \AssertionSet$ according to the rules presented in Figure~\ref{fig:ViewShiftRelation}.
    \end{mydefinition}
        
    \begin{figure}
            \begin{mathpar}
                \VSObCredIntro
                \and
                \VSSemImp
                \and
                \VSTrans
            \end{mathpar}
            
            \caption{View shift rules.}
            \label{fig:ViewShiftRelation}
    \end{figure}

    Note that view shifts only allow to spawn or remove obligations and credits simultaneously.
    This way, we ensure that the number of obligations and credits in the system remains equal at any time (provided this also holds for the program's initial state).

    \paragraph{Proof Rules}
    We verify program specifications \hoareTriple{\htPreConVar}{\cmdVar}{\htPostConVar}
    via a proof relation \htProvesSymb defined by a set of proof rules.
    These rules are designed to prove that every command \cmdVar, which provably discharges its obligations, i.e.,
    \htProves{\obs{n}}{\cmdVar}{\noObs},
    terminates under fair scheduling.

    \begin{mydefinition}[Proof Relation]\label{def:ProofRelation}
        We define a proof relation \htProvesSymb for Hoare triples \hoareTriple{\htPreConVar}{\cmdVar}{\htPostConVar}
        according to the rules presented in Figure~\ref{fig:proofRules}.
    \end{mydefinition}

    \begin{figure}
        \begin{mathpar}
            \PRFrame
            \and
            \PRExit
            \and
            \PRLoop
            \and
            \PRFork
            \and
            \PRSeq
            \and
            \PRViewShift
        \end{mathpar}
    
        \caption{Proof rules.}
        \label{fig:proofRules}
    \end{figure}

    Obligation-credit pairs can be generated and removed via a ghost step by applying \PRViewShiftName plus \VSObCredIntroName.
    The only way to discharge an obligation, i.e., removing it without simultaneously removing a credit, is via rule \PRExitName.
    That is, a discharging program \htProves{\obs{1}}{\cmdVar}{\noObs}, must involve an abrupt \cmdExit at some point.
    
    We can pass obligations and credits to newly forked threads by applying \PRForkName.
    However, note that in order to prove anything about a command \cmdFork[\cmdVar], we need to prove that the forked thread discharges or cancels all of its obligations.

    The only way to justify busy-waiting is via \PRLoopName, which requires the possession of a credit.
    Note that the rule forbids the looping thread to hold any obligations.
    This ensures that threads do not busy-wait for themselves to \cmdExit.

    \paragraph{Example}
    Consider the program
    $\cmdExVar = 
        \cmdFork[\cmdExit];
        \cmdLoop[\cmdSkip]
    $
    presented in Figure~\ref{fig:ExampleProgramCode}.
    It forks a new thread instructed to \cmdExit and busy-waits for it to do so.
    We can verify its termination under fair scheduling by proving 
    \htProves{\noObs}{\cmdExVar}{\noObs}.
    Figure~\ref{fig:VerificationExampleCode} sketches this proof.
    Note that the assumption of fair scheduling is essential, since otherwise we would have no guarantees that the exiting thread is ever executed.

    \begin{figure}
            $$
            \begin{array}{l l}
                    \progProof{\obs{0}}\\
                    \progProof{\obs{1} \slStar \credit}
                            &\proofRuleHint{\PRViewShiftName + \VSObCredIntroName}\\
                    \cmdFork
                            &\proofRuleHint{\PRForkName}\\
                            
                            \quad\progProof{\obs{1}}\\
                            \quad\cmdExit;
                                    &\proofRuleHint{\PRExitName}\\
                            \quad\progProof{\slFalse}\\
                            \quad\progProof{\obs{0}}
                                    &\proofRuleHint{\PRViewShiftName + \VSSemImpName}\\
                    
                    \progProof{\obs{0} \slStar \credit}\\
                    \cmdLoop[\cmdSkip]
                            &\proofRuleHint{\PRLoopName}\\
                            
                    \progProof{\slFalse}\\
                    \progProof{\obs{0}}
                            &\proofRuleHint{\PRViewShiftName + \VSSemImpName}
            \end{array}
            $$
            
            \caption
            {
                    Verification sketch for a program with two threads: 
                    An exiting thread and one busy-waiting for abrupt termination.
                    Applied proof rules are highlighted in violet.
            }
            
            \label{fig:VerificationExampleCode}
    \end{figure}

    The following \hspace{0.02cm}soundness\hspace{0.02cm} theorem\hspace{0.02cm} states that we can prove termination of a program \cmdVar under fair scheduling by proving that it discharging all its \cmdExit obligations,
    i.e.,
    \htProves{\obs{n}}{\cmdVar}{\noObs}.
    By such a proof we verify that no fair infinite reduction sequence of \cmdVar exists.
    That is, the reduction eventually terminates, either abruptly via \cmdExit or normally.

    \begin{restatable}[Soundness]{mytheorem}{SoundnessTheorem}
    \label{theo:Soundness}
        Let
        $\htProves{\obs{n}}{\cmdVar}{\noObs}$.
        There exists no fair reduction sequence $(\tpVar_i)_{i\in\N}$ starting with $\tpVar_0 = \setOf{(\tidVar_0, \cmdVar;\contDone)}$ for any $\tidVar_0 \in \N$.
    \end{restatable}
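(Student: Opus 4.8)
The plan is to argue by contradiction: assuming a fair reduction sequence $(\tpVar_i)_{i\in\N}$ with $\tpVar_0 = \setOf{(\tidVar_0, \cmdVar;\contDone)}$ exists, I would derive a falsehood from the hypothesis $\htProves{\obs{n}}{\cmdVar}{\noObs}$. The central device is a runtime invariant obtained by lifting the proof relation to whole thread pools. First I would lift $\htProvesSymb$ from commands to continuations, reading $\cmdVar;\contVar$ by composing a command triple with a continuation triple and reading $\contDone$ as a thread that already owns $\noObs$ up to view shift. Then I would call a pool $\tpVar$ \emph{safe} if one can assign to every $\tidVar\in\domOf{\tpVar}$ a \completeTerm{} bundle $\rbVar_\tidVar$ and an assertion $\assVar_\tidVar$ with $\assModels{\rbVar_\tidVar}{\assVar_\tidVar}$ and a derivation $\htProves{\assVar_\tidVar}{\tpVar(\tidVar)}{\noObs}$. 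Alongside safety I would carry the global balance $\sumOb{\tpVar} \ge \sumCred{\tpVar}$, summed over the assigned bundles, which records that every credit is backed by an obligation.

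I would then establish the two pillars of this invariant. For \emph{initialisation}, $\tpVar_0$ is safe: assign its single thread the bundle $(\{n\},0)$, which models $\obs{n}$ and, with the hypothesis, yields $\htProves{\obs{n}}{\cmdVar;\contDone}{\noObs}$; here $\sumOb{\tpVar_0}-\sumCred{\tpVar_0}=n\ge 0$. For \emph{preservation} (subject reduction), if $\tpVar$ is safe and $\tpRedStep{\tpVar}{\tidVar}{\nextTpVar}$, then $\nextTpVar$ is safe with $\sumOb{\nextTpVar}-\sumCred{\nextTpVar}\ge\sumOb{\tpVar}-\sumCred{\tpVar}$. This is a case analysis on the thread-pool rules (and, for lifted steps, on the single-thread rules), inverting the proof rules on the stepping thread. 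Ghost steps are absorbed by closing safety under $\viewShiftSymb$; since a view shift only introduces or cancels obligation--credit pairs, it preserves the balance. The fork case splits the bundle exactly as $\PRForkName$ prescribes, preserving totals; $\RedTPThreadTermName$ removes a thread carrying $\noObs$, i.e.\ zero obligations, so the balance can only grow; and $\RedTPExitName$ empties the pool, which is vacuously safe.

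The contradiction then exploits that the language has no recursion and no loop other than the busy-wait $\cmdLoop[\cmdSkip]$. I would first note that in an infinite fair sequence no $\RedTPExitName$ step ever fires, since it yields $\tpEmpty$, from which no rule reduces; for the same reason the pool is never emptied and so stays nonempty. Next I would fix a well-founded measure (a lexicographic combination of the multiset of continuation sizes with a count of left-nested sequencing) that strictly decreases on every \emph{non-loop} step---fork, sequencing and $\RedTPThreadTermName$---while loop steps leave the pool unchanged; the absence of recursion makes such a measure available, so only finitely many non-loop steps occur. Hence there is an $N$ past which every step is a loop step and the pool is constant. Applying fairness at $k=N$, each $\tidVar\in\domOf{\tpVar_N}$ is eventually stepped, necessarily by a loop step, so every thread of the nonempty pool $\tpVar_N$ has head command $\cmdLoop[\cmdSkip]$. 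Inverting safety through $\PRLoopName$ on each of them shows it holds a credit and \emph{no} obligation; therefore $\sumOb{\tpVar_N}=0$ while $\sumCred{\tpVar_N}\ge 1$, contradicting the balance $\sumOb{\tpVar_N}\ge\sumCred{\tpVar_N}$.

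The main obstacle is preservation: pinning down the thread-pool safety invariant precisely enough that it is simultaneously stable under every reduction rule and closed under the ghost view shifts, with the fork case (correctly splitting a \completeTerm{} bundle between parent and freshly created child) and the seamless interleaving of ghost steps being the delicate points. The remaining ingredients are comparatively routine once the invariant is fixed: boundedness of non-loop steps is a standard well-founded-measure argument enabled by the lack of recursion, and the final count is immediate from $\PRLoopName$ forbidding obligations on a looping thread together with the credits-are-backed balance.
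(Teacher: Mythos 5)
Your proposal is correct in outline, but it proves the theorem by a genuinely different route than the paper. The paper factors the argument through an \emph{annotated} operational semantics: it defines a coinductive safety predicate \isSafeName and a model relation \htModelsSymb, proves \htProves{\htPreConVar}{\cmdVar}{\htPostConVar} $\Rightarrow$ \htModels{\htPreConVar}{\cmdVar}{\htPostConVar} by induction on derivations (Lemma~\ref{lem:SoundnessHoareTriples}), lifts a fair plain reduction sequence to a fair annotated one (Lemma~\ref{lem:ModelRelationAllowsAnnotation}), and then refutes the annotated sequence combinatorially via the program order graph: the maximal loop-edge-free sibling-closed prefix has leaves whose total obligations equal their total credits (Lemma~\ref{lem:LeafObsEqLeafCreds}), so the first loop step forces some other leaf to hold an obligation, and that leaf can only be an \cmdExit step, contradicting infiniteness. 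You instead run a direct syntactic subject-reduction argument --- a pool-level invariant assembled from \htProvesSymb-derivations plus the global balance $\sumOb{\tpVar}\ge\sumCred{\tpVar}$ --- and close with a well-founded measure showing that all but finitely many steps are loop steps, so that eventually \emph{every} thread loops, each holding a credit and no obligation, which violates the balance. Both counting arguments are sound (yours pushes to the limit where everything loops; the paper's only needs the \emph{first} loop step and its sibling-closed prefix), but note two things. First, your measure argument leans essentially on the language having no recursion and no loop body other than \cmdSkip; this is exactly the simplification the paper declines to exploit because it wants machinery (annotated semantics, safety, program order graphs) that survives in a realistic language, where ``only finitely many non-loop steps'' is simply false. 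Second, your balance must be carried on the \emph{assigned bundles}, not on the assertions (since \credit only asserts ``at least one'' credit), and you need the small inversion lemma that every view shift admits a bundle transformation preserving the per-thread obligation-minus-credit difference; you flag this, but it is the load-bearing point where $\VSObCredIntroName$'s pairing of spawn/cancel enters, and it deserves an explicit proof by induction on the view-shift derivation.
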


\section{Soundness}\label{sec:Soundness}
    Proving Soundness Theorem~\ref{theo:Soundness} requires us to establish a connection between our proof relation~\htProvesSymb and the operational semantics, i.e., the thread pool reduction relation~\tpRedStepSymb.
    
    \paragraph{Bridging the Gap}
    According to our proof rules, ghost resources are not static but affected by the commands occurring in a program.
    For instance, forking allows us to pass resources to the newly forked thread and exiting discharges obligations (cf. proof rules \PRForkName and \PRExitName+\PRViewShiftName in Figures~\ref{fig:proofRules} and~\ref{fig:ViewShiftRelation}).
    We capture the connection between ghost resources and program executions by annotating threads with resource bundles and introducing an annotated operational semantics mimicking the resource manipulation apparent in our proof rules.

    In our proof rules we used Hoare triples 
    \hoareTriple{\htPreConVar}{\cmdVar}{\htPostConVar}
    to specify the behaviour of a program \cmdVar.
    We interpret such triples in a model relation \htModelsSymb, which we define in terms of the annotated semantics.
    Intuitively, \htModels{\htPreConVar}{\cmdVar}{\htPostConVar}
    expresses that given any resource bundle fulfilling precondition \htPreConVar, we can reduce command \cmdVar in the annotated semantics without getting stuck.
    In case the reduction terminates normally (i.e., in case it neither diverges nor exits abruptly), postcondition \htPostConVar holds afterwards.
    Note that this interpretation complies with the intuition behind our proof rules.
    
    We prove our proof relation \htProvesSymb sound with respect to our model relation \htModelsSymb, i.e., we show that
    \htProves{\htPreConVar}{\cmdVar}{\htPostConVar}
    implies
    \htModels{\htPreConVar}{\cmdVar}{\htPostConVar},
    and establish a connection between the annotated and the plain semantics.
    This establishes the missing link between our proof rules and program executions and thereby allows us to prove the soundness theorem.

\subsection{Annotated Executions}\label{subsec:app:Annotations}
    We use ghost resources to track which threads are obliged to \cmdExit, and which are allowed to busy-wait for another thread to do so.
    In order to associate threads with their respective ghost resources, we introduce an annotated version of thread pools.

    \begin{mydefinition}[Annotated Thread Pools]
        We define the set of \emph{annotated thread pools} \RAThreadPoolSet as follows
        $$
        \begin{array}{r c l}
            \RAThreadPoolSet &:=&
                    \{
                            \rtpVar : \tpDomVar \rightarrow \ResourceBundleSet \times \ContSet
                            \ | \
                            \tpDomVar \finSubset \N
                    \}.
        \end{array}
        $$
        We denote annotated thread pools by \rtpVar and the \emph{empty annotated thread pool} by $\rtpEmpty: \emptyset\rightarrow \ResourceBundleSet \times \ContSet$.
        Furthermore, we define an extension operation \rtpExt and a removal operation \rtpRem analogously to \tpExt and \tpRem, respectively, cf.\@ Definition~\ref{def:ThreadPoolExtension}.
    \end{mydefinition}
    
    As it becomes apparent in the proof rules, a thread's resources are not static but affected by the thread's actions.
    For instance, threads can spawn obligation-credit pairs and pass some of their held resources to newly forked threads 
    (cf. proof rules \PRViewShiftName \& \VSObCredIntroName, \PRForkName in Figures~\ref{fig:proofRules} and~\ref{fig:ViewShiftRelation}).
    To make this precise, we define annotated versions \rstRedStepSymb and \rtpRedStepSymb of the reduction relations \stRedStepSymb and \tpRedStepSymb defined in Section~\ref{sec:Language}.

    \begin{mydefinition}
        Let
        $(\multiset{\obVar_1}, \credVar_1), (\multiset{\obVar_2}, \credVar_2) \in \ResourceBundleSet$.
        We define
        $$
                \begin{array}{l c l}
                         (\multiset{\obVar_1}, \credVar_1)\ \rbPlus\
                         (\multiset{\obVar_2}, \credVar_2)
                         &:=
                         &(\multiset{\obVar_1 + \obVar_2}, 
                              \credVar_1 + \credVar_2).
                 \end{array}
        $$
    \end{mydefinition}
    
    \begin{mydefinition}[Annotated Single-Thread Reduction Relation]
            We define an \emph{annotated single-thread reduction relation} \rstRedStepSymb according to the rules presented in Figure~\ref{fig:ResourceAnnotatedSingleThreadReductionRules}.
            A reduction step has the form
            $$
                \rstRedStep
                        {\rbVar}{\contVar}
                        {\nextRbVar}{\nextContVar}[\rftsVar]
            $$
            for a set of forked annotated threads 
            $\rftsVar \subset \ResourceBundleSet \times \ContSet$ 
            with $\cardinalityOf{\rftsVar} \leq 1$.
        \end{mydefinition}

    \begin{figure}
    \begin{mathpar}

    \end{mathpar}
        \begin{mathpar}
            \RARedSTLoop
            \and
            \RARedSTFork
            \and
            \RARedSTSeq
        \end{mathpar}
        
        \caption{Annotated reduction rules for single threads.}
        \label{fig:ResourceAnnotatedSingleThreadReductionRules}
    \end{figure}
    
    In contrast to the plain semantics, reductions in the annotated semantics can get stuck.
    Note that according to \RARedSTLoopName, performing a loop iteration requires holding an empty obligations chunk.
    This ensures that busy-waiting threads do not wait for themselves and corresponds to the restriction that proof rule \PRLoopName imposes on looping threads.
    Consider the program \cmdLoop[\cmdSkip] busy-waiting for itself.
    Our proof rules do not allow us to prove any specification
    \hoareTriple{\htPreConVar}{\cmdLoop[\cmdSkip]}{\htPostConVar}
    where precondition \htPreConVar is neither \slFalse nor contains any \credit
    and its reduction gets stuck in the annotated semantics.
    
    For the annotated semantics we introduce ghost steps that do not correspond to steps in the unannotated semantics, but only affect the ghost resources we added for verification purposes. 
    In particular, ghost steps allow a thread to spawn an obligation-credit pair
    and to cancel an obligation against a credit.
    We introduce two auxiliary step relations \grtpRedStepSymb and \ngrtpRedStepSymb to clearly differentiate between ghost steps and steps corresponding to \emph{real} program steps.

    \begin{mydefinition}[Ghost Thread Pool Steps]
            We define a \emph{ghost step relation} \grtpRedStepSymb on annotated thread pools according to the rules presented in Figure~\ref{fig:ThreadPoolGhostReductionRules}.
            A ghost step has the form
            $$\grtpRedStep{\rtpVar}{\tidVar}{\nextRtpVar}$$
            for an ID $\tidVar \in \domOf{\rtpVar}$ and only affects the resources associated with \tidVar.
            We denote its reflexive transitive closure by \grtpMultRedStepSymb[\tidVar].
    \end{mydefinition}
    
    \begin{figure}
            \begin{mathpar}
                    \GSObCredIntro
                    \and
                    \GSObCredCancel
            \end{mathpar}
            
            \caption{Ghost step rules for thread pools. Ghost steps allow to spawn and cancel an obligation-credit pair.}
            \label{fig:ThreadPoolGhostReductionRules}
    \end{figure}
    
    Ghost steps reflect the resource manipulation expressed by view shifts.
    A ghost step performed by \GSObCredIntroName spawns an obligation-credit pair while \GSObCredCancelName cancels an obligation against a credit.
    This mimics the treatment of obligations and credits displayed by view shift rule
    \VSObCredIntroName.

    \begin{mydefinition}[Real Thread Pool Reduction Steps]
        We define a \emph{non-ghost reduction relation} \ngrtpRedStepSymb for annotated thread pools according to the rules presented in Figure~\ref{fig:ThreadPoolRealReductionRules}.
        A reduction step has the form
        $$
            \ngrtpRedStep{\rtpVar}{\tidVar}{\nextRtpVar}
        $$
        for a thread ID $\tidVar \in \domOf{\rtpVar}$.
    \end{mydefinition}

    \begin{figure}
        \begin{mathpar}
            \RARedTPLift
            \and
            \RARedTPExit
            \and
            \RARedTPThreadTerm
        \end{mathpar}
        
        \caption{Annotated reduction rules for non-ghost steps of thread pools.}
        \label{fig:ThreadPoolRealReductionRules}
    \end{figure}

    We only allow annotated threads to terminate, 
    i.e., be removed from the thread pool,
    if they do not hold any obligations, as shown by rule \RARedTPThreadTermName.
    
    We define the annotated thread pool reduction relation \rtpRedStepSymb as the union of~\ngrtpRedStepSymb and~\grtpRedStepSymb.
    \begin{mydefinition}[Annotated Thread Pool Reduction Relation]
        We define an \emph{annotated thread pool reduction relation} \rtpRedStepSymb such that:
        $$
                \rtpRedStep{\rtpVar}{\tidVar}{\nextRtpVar}
                \ \ \Longleftrightarrow\ \ 
                \ngrtpRedStep{\rtpVar}{\tidVar}{\nextRtpVar}
                \ \vee\ 
                \grtpRedStep{\rtpVar}{\tidVar}{\nextRtpVar}
        $$
    \end{mydefinition}
    
    Note that our reduction relation \rtpRedStepSymb ensures that at any time the number of spawned credits in the system equals the number of spawned obligations.
    Also, the only way to discharge an obligation, i.e., removing it without simultaneously removing a credit, is by exiting.
    These two properties are crucial to the Soundness proof presented in Section~\ref{subsec:SoundnessProof}.

    \begin{mydefinition}[Annotated Reduction Sequence]
            We define annotated reduction sequences analogously to Definition~\ref{def:ReductionSequence}.
    \end{mydefinition}

    Since our goal is to prove that no fair reduction sequence can correspond to a program discharging all its obligations, we need to lift our fairness definition to the annotated semantics.

    \begin{restatable}[Fair Annotated Reduction Sequences]{mydefinition}{defFairAnnotatedRedSeq}
            We call an annotated reduction sequence \rtpRSeq{\N} fair iff
            for all $k\in \N$ and $\tidVar_k \in \domOf{\rtpVar_k}$
            there exists $j \geq k$ such that
            $$\ngrtpRedStep{\rtpVar_j}{\tidVar_k}{\rtpVar_{j+1}}$$
    \end{restatable}
    
    Note that fairness prohibits threads to perform ghost steps forever.

\subsection{Interpreting Specifications}

    \paragraph{Specifications}
    We use Hoare triples
    \hoareTriple{\htPreConVar}{\cmdVar}{\htPostConVar}
    to express specifications.
    Intuitively, such a triple states that given precondition \htPreConVar, command \cmdVar either 
    (i)~diverges, i.e., loops forever, 
    (ii)~abruptly terminates via \cmdExit or 
    (iii)~terminates normally and postcondition \htPostConVar holds afterwards.
    In the following we make this intuition precise such that we can use it to show the correctness of our verification approach.
    The annotated semantics act as an intermediary between high-level reasoning steps, e.g., using obligations to track which thread is going to \cmdExit, and the actual program executions.
    Hence, we use this connection to define the meaning of Hoare triples.

    Note that a reduction in the annotated semantics can get stuck in contrast to a reduction in the plain semantics.
    Therefore, a specification
    \hoareTriple{\htPreConVar}{\cmdVar}{\htPostConVar}
    additionally expresses that reduction of \cmdVar in the annotated semantics does not get stuck.

    \paragraph{Interpretations}
    We interpret Hoare triples in terms of a model relation \htModelsSymb and an auxiliary safety relation \isSafe{\rbVar}{\cmdVar}.
    Intuitively, a continuation \contVar is safe under a complete resource bundle \rbVar if \rbVar provides all necessary ghost resources such that the reduction of  $(\rbVar, \contVar)$ does not get stuck.
    We write \isResAnnot{\rtpVar}{\tpVar} to express that \rtpVar is an annotated version of \tpVar, containing the same threads but each equipped with a resource bundle.
    
        \begin{mydefinition}[Annotation of Thread Pools]
            We say that \rtpVar is an \emph{annotation} of \tpVar and write \isResAnnot{\rtpVar}{\tpVar} if $\domOf{\tpVar} = \domOf{\rtpVar}$ and if for every thread ID $\tidVar \in \domOf{\tpVar}$ there exists a resource bundle $\rbVar \in \ResourceBundleSet$ such that $\rtpVar(\tidVar) = (\rbVar,\, \tpVar(\tidVar))$.
        \end{mydefinition}
    
    \begin{restatable}[Safety]{mydefinition}{defSafety}\label{def:Safety}
        We define the safety predicate $\isSafeName \subseteq \ResourceBundleSet \times \ContSet$ coinductively as the greatest solution (with respect to $\subseteq$) of the following equation:
        $$
        \begin{array}{l}
                \isSafe{\rbVar}{\contVar}\ =\ \complete{\rbVar} \rightarrow\\
                \forall \tpVar, \nextTpVar.\
                \forall \tidVar \in \domOf{\tpVar}.\
                \forall \rtpVar.\\
                \ \
                        \tpVar(\tidVar) = \contVar \wedge
                        \tpRedStep{\tpVar}{\tidVar}{\nextTpVar} \wedge
                        \isResAnnot{\rtpVar}{\tpVar} \wedge
                        \rtpVar(\tidVar) = (\rbVar, \contVar)
                        \rightarrow\\
                \ \
                                \exists \grtpVar.\
                                \exists \nextRtpVar.\
                                        \grtpMultRedStep{\rtpVar}{\tidVar}{\grtpVar} \wedge
                                        \ngrtpRedStep{\grtpVar}{\tidVar}{\nextRtpVar} \wedge
                                        \isResAnnot{\nextRtpVar}{\nextTpVar}\\
                \ \ \phantom{\exists \grtpVar.\ }
                                        \wedge 
                                        \forall (\rbVar^*, \contVar^*) \in 
                                            \rangeOf{\nextRtpVar} \setminus \rangeOf{\rtpVar}.\
                                                    \isSafe{\rbVar^*}{\contVar^*}
        \end{array}
        $$
    \end{restatable}

    Intuitively, a Hoare triple 
    \hoareTriple{\htPreConVar}{\cmdVar}{\htPostConVar}
    holds in our model relation \htModelsSymb if the following two conditions are met:
    \begin{itemize}
            \item[(1)] Any resources \rbPreVar fulfilling precondition \htPreConVar suffice to reduce \cmdVar without getting stuck, i.e., 
            \isSafe{\rbPreVar}{\cmdVar; \contDone}.
            \item[(2)] If reduction of $(\rbPreVar, \cmdVar; \contDone)$ terminates in $(\rbPostVar, \contDone)$ (i.e., reduction does neither abruptly \cmdExit nor diverge), then \rbPostVar fulfils postcondition \htPostConVar.
    \end{itemize}
    Note that given (1), property (2) is equivalent to 
    \isSafe{\rbPostVar}{\cmdVar; \contVar}
    for any continuation \contVar safe under \rbPostVar.

    \begin{restatable}[Hoare Triple Model Relation]{mydefinition}{defHoareTripleModelRelation}\label{def:HoareTripleModelRelation}
        We define the Hoare triple model relation \htModelsSymb such that
        $$
        \begin{array}{c}
                \htModels{\htPreConVar}{\cmdVar}{\htPostConVar}\\
                \Longleftrightarrow\\
                \begin{array}{l}
                        \forall \frbVar.\
                        \forall \contVar.\
                                (\forall \rbPostVar.\
                                        \assModels{\rbPostVar}{\htPostConVar}
                                        \ \rightarrow\ 
                                        \isSafe{\rbPostVar \tupleCup \frbVar}{\contVar})
                        \\
                        \phantom{\forall \frbVar.\
                                                \forall \contVar.\ }
                                \rightarrow\ 
                                (\forall \rbPreVar.\
                                        \assModels{\rbPreVar}{\htPreConVar}
                                        \ \rightarrow \
                                        \isSafe{\rbPreVar \tupleCup \frbVar}{\,\cmdVar\, ; \contVar})
                \end{array}
        \end{array}
        $$
    \end{restatable}
    
    Note that compliance with the frame rule directly follows from above definition, i.e., 
    \htModels{\htPreConVar}{\cmdVar}{\htPostConVar}
    implies
    \htModels
            {\htPreConVar \slStar \htFrameVar}
            {\cmdVar}
            {\htPostConVar \slStar \htFrameVar}
    for any frame $\htFrameVar \in \AssertionSet$.
    Further, every specification \hoareTriple{\htPreConVar}{\cmdVar}{\htPostConVar} we can derive with our proof rules also holds in our model.
    
    \begin{restatable}[Soundness of Hoare Triples]{mylemma}{lemSoundnessHoareTriples}\label{lem:SoundnessHoareTriples}
        Let \htProves{\htPreConVar}{\cmdVar}{\htPostConVar}.
        Then \htModels{\htPreConVar}{\cmdVar}{\htPostConVar} holds.
    \end{restatable}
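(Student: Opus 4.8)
The plan is to prove the lemma by induction on the derivation of $\htProves{\htPreConVar}{\cmdVar}{\htPostConVar}$, showing in each case that the model relation $\htModelsSymb$ is closed under the applied proof rule, using the model relations of the premises as induction hypotheses. Unfolding Definition~\ref{def:HoareTripleModelRelation}, every case reduces to a statement about the coinductive safety predicate from Definition~\ref{def:Safety}: fixing a frame $\frbVar$ and a continuation $\contVar$ that is safe under every bundle modelling the postcondition, one must establish $\isSafe{\rbPreVar \tupleCup \frbVar}{\cmdVar; \contVar}$ for every $\rbPreVar$ modelling the precondition. Two cases are immediate from this continuation-passing-with-frame shape. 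Rule \PRFrameName is exactly the note following Definition~\ref{def:HoareTripleModelRelation}: a bundle modelling $\htPreConVar \slStar \htFrameVar$ splits off its frame part, which is absorbed into $\frbVar$ using commutativity and associativity of $\tupleCup$. Sequencing \PRSeqName follows by threading the two hypotheses through the continuation: given $\contVar$ safe under $\htPostConVar$, the hypothesis for the second command makes $\cmdVar_2; \contVar$ safe under the intermediate assertion, and then the hypothesis for the first command makes $\cmdVar_1; (\cmdVar_2; \contVar)$ safe under $\htPreConVar$, which right-associativity of sequencing identifies with $(\cmdVar_1; \cmdVar_2); \contVar$.

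For \PRExitName and \PRLoopName the postcondition is $\slFalse$, so the continuation-safety hypothesis is vacuous and safety of $\cmdVar; \contVar$ must be shown outright. For exit, the only plain reduction of the thread is the exit step (cf.\ \RedTPExitName), which clears the whole thread pool; it is matched by the corresponding annotated exit step, the resulting empty annotated pool is trivially an annotation, and no threads are forked, so a single unfolding of the safety predicate closes the case. For the loop, I would establish safety coinductively: by completeness together with the empty obligations chunk guaranteed by $\obs{0}$ in the precondition, the annotated loop step \RARedSTLoopName applies, and since the loop reduces to itself the post-state carries the same bundle and continuation, so the coinductive hypothesis discharges the goal; again no thread is forked and the $\slFalse$ postcondition makes condition (2) vacuous.

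The fork and view-shift cases carry the real content. For \PRForkName I would split the precondition bundle into the part handed to the spawned thread and the part retained by the forking thread, match the plain fork step with the annotated fork step that annotates the new thread with the former part, and discharge the ``newly forked threads are safe'' obligation of the safety predicate by invoking the induction hypothesis of the forked triple $\htProves{\cdot}{\cmdVar}{\noObs}$, instantiated with empty frame and the terminal continuation $\contDone$; here the $\noObs$ postcondition is precisely what guarantees $\contDone$ is safe, since a thread may be removed only when it holds no obligations. The retained thread's safety comes from the outer continuation hypothesis applied to the retained resources. For \PRViewShiftName I would first prove an auxiliary correspondence between view shifts and ghost steps: whenever $\viewShift{\assVar}{\assVar'}$ and $\assModels{\rbVar}{\assVar}$, there is a sequence of ghost steps taking $\rbVar$ to some $\rbVar'$ with $\assModels{\rbVar'}{\assVar'}$ --- handling \VSObCredIntroName via \GSObCredIntroName and \GSObCredCancelName, \VSSemImpName semantically, and the transitive rule by composition. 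Combined with the observation that safety is closed under prepending ghost steps to a thread's bundle (the leading ghost-step prefix permitted by Definition~\ref{def:Safety} absorbs them, and ghost steps preserve completeness since they act within the single chunk), the precondition view shift is realized by ghost steps before reduction and the postcondition view shift converts the continuation-safety hypothesis from $\htPostConVar'$-resources to $\htPostConVar$-resources.

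I expect the main obstacle to be the interaction of the coinductive safety predicate with resource splitting and the frame, concentrated in the fork and view-shift cases. One must verify that the bundle a thread actually carries, namely $\rbPreVar \tupleCup \frbVar$, stays \completeTerm after the forked part is split off and after the ghost steps that realize a view shift, and that the coinductive obligation for the spawned thread is available only because its triple has postcondition $\noObs$. Getting the bookkeeping of $\tupleCup$, completeness, and the leading ghost-step prefix of the safety predicate to line up uniformly across these cases --- rather than any single reduction step --- is where the care is required.
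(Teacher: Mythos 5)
Your proposal is correct and takes exactly the route the paper does: the paper's entire proof is the one-line ``By induction on the derivation of \htProves{\htPreConVar}{\cmdVar}{\htPostConVar}'', and your case analysis (frame and sequencing from the continuation-passing shape of Definition~\ref{def:HoareTripleModelRelation}, coinduction for the loop, resource splitting for fork, and a ghost-step realization of view shifts absorbed into the leading \grtpMultRedStepSymb prefix of Definition~\ref{def:Safety}) is a faithful elaboration of that induction. No discrepancy with the paper's argument.
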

    \begin{restatable}{proof}{proofLemSoundnessHoareTriples}
        By induction on the derivation of \htProves{\htPreConVar}{\cmdVar}{\htPostConVar}.
    \end{restatable}

    This lemma bridges the gap between our verification approach and the annotated semantics.
    That is, whenever we can prove a specification
    \htProves{\htPreConVar}{\cmdVar}{\htPostConVar},
    we know that command \cmdVar can be safely reduced in the annotated semantics given precondition \htPreConVar.
    If this reduction terminates normally (i.e. in case it neither diverges nor abruptly terminates), postcondition \htPostConVar holds in the final state.

\subsection{Soundness Proof}\label{subsec:SoundnessProof}
    In the proof of Soundness Theorem~\ref{theo:Soundness}, we show that programs which provably discharge all their \cmdExit obligations and start without credits terminate under fair scheduling.
    That is, we show that such programs cannot have a corresponding fair reduction sequence.
    To be able to refer to the eventually discharged obligations, the proof requires us to refer to annotated reduction sequences.
    The following lemma allows us to construct such an annotated reduction sequence from an unannotated one.

    \begin{myobservation}[\contDone Safe]\label{observation:DoneSafe}
        \isSafe{\rbVar}{\contDone}
        holds for all \completeTerm \rbVar.
    \end{myobservation}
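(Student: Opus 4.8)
The plan is to unfold the coinductive definition of safety (Definition~\ref{def:Safety}) and check that a thread whose continuation is \contDone can always match the unique plain reduction step available to it. Since \isSafeName is the greatest solution of the defining equation, it suffices to exhibit a post-fixpoint containing the relevant pairs: I would take the candidate relation $S = \setOf{(\rbVar, \contDone) \mid \complete{\rbVar}}$ and verify that every element of $S$ satisfies the body of the equation once the recursive occurrences of \isSafeName are replaced by membership in $S$. As the termination step forks no threads, the recursive conjunct will turn out vacuous, so the argument will not even appeal to the coinductive hypothesis.

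So, fix a \completeTerm bundle \rbVar and assume \complete{\rbVar}; then fix arbitrary $\tpVar, \nextTpVar, \tidVar \in \domOf{\tpVar}$ and \rtpVar satisfying the four premises, in particular $\tpVar(\tidVar) = \contDone$, $\tpRedStep{\tpVar}{\tidVar}{\nextTpVar}$, $\isResAnnot{\rtpVar}{\tpVar}$ and $\rtpVar(\tidVar) = (\rbVar, \contDone)$. The structural point is that \contDone has no head command: it admits no single-thread reduction and is not of the form $\cmdExit; \contVar$, so on \tidVar neither \RedTPLiftName nor \RedTPExitName can apply (Figure~\ref{fig:ThreadPoolReductionRelation}). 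Hence inversion on $\tpRedStep{\tpVar}{\tidVar}{\nextTpVar}$ pins the step to \RedTPThreadTermName, forcing $\nextTpVar = \tpVar \tpRem \tidVar$. To match it in the annotated semantics (Figures~\ref{fig:ThreadPoolGhostReductionRules} and~\ref{fig:ThreadPoolRealReductionRules}) I would first discharge the obligations held by \tidVar: apply finitely many ghost steps \GSObCredCancelName to \tidVar, obtaining $\grtpVar$ with $\grtpMultRedStep{\rtpVar}{\tidVar}{\grtpVar}$ in which \tidVar's single obligations chunk is empty, and then apply \RARedTPThreadTermName to get $\nextRtpVar = \grtpVar \rtpRem \tidVar$ with $\ngrtpRedStep{\grtpVar}{\tidVar}{\nextRtpVar}$. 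Because these steps only touch \tidVar's resources and only remove \tidVar, the pool $\nextRtpVar$ has domain $\domOf{\tpVar} \setminus \setOf{\tidVar}$ and coincides with \rtpVar on every remaining thread, so $\isResAnnot{\nextRtpVar}{\nextTpVar}$ holds and moreover $\rangeOf{\nextRtpVar} \subseteq \rangeOf{\rtpVar}$. Consequently $\rangeOf{\nextRtpVar} \setminus \rangeOf{\rtpVar} = \emptyset$ and the final conjunct, the safety of all freshly introduced bundle/continuation pairs, is vacuously true, closing the post-fixpoint check.

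The one genuinely load-bearing step is the side condition of \RARedTPThreadTermName: the annotated thread-termination rule fires only when \tidVar holds no obligations, so the crux is justifying the intermediate ghost reduction $\grtpMultRedStep{\rtpVar}{\tidVar}{\grtpVar}$ that empties \tidVar's obligations chunk (against the credits recorded in \rbVar) before termination. Everything else — the inversion isolating \RedTPThreadTermName, the preservation of the annotation on the surviving threads, and the vacuity of the coinductive conjunct — is routine bookkeeping.
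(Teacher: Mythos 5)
Your overall strategy is the natural one (the paper itself states this as an observation without proof, so there is nothing to compare against): exhibit the post-fixpoint $S = \setOf{(\rbVar,\contDone) \ | \ \complete{\rbVar}}$, invert the plain step to \RedTPThreadTermName, match it with ghost steps followed by \RARedTPThreadTermName, and note that $\rangeOf{\nextRtpVar}\setminus\rangeOf{\rtpVar}=\emptyset$ makes the coinductive conjunct vacuous. All of that bookkeeping is fine.

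The gap sits exactly at the step you yourself flag as load-bearing, and you do not close it. \GSObCredCancelName removes an obligation and a credit \emph{simultaneously}, so the difference between the number of obligations and the number of credits in \tidVar's bundle is invariant under ghost steps. Being \completeTerm only says the bundle has exactly one obligations chunk; it places no bound on the number of obligations in that chunk relative to the number of credits. For $\rbVar = (\multiset{1}, 0)$, say, no sequence of \GSObCredIntroName/\GSObCredCancelName steps ever empties the obligations chunk, so the side condition of \RARedTPThreadTermName never becomes satisfiable, the existential in Definition~\ref{def:Safety} cannot be witnessed, and your construction gets stuck --- indeed, taken literally, the observation appears to fail for such bundles. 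What makes the observation work where the paper actually uses it (Lemma~\ref{lem:ModelRelationAllowsAnnotation}) is that the relevant post-bundles satisfy \obs{0} and are complete, so their single obligations chunk is already empty and \RARedTPThreadTermName applies with no ghost steps at all. You should either restrict the claim to complete bundles holding no obligations (which suffices for every use in the paper) or justify why the credit count dominates the obligation count; the phrase ``against the credits recorded in \rbVar'' silently presupposes there are enough of them.
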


    \begin{restatable}{mylemma}{lemModelAllowsAnnotation}\label{lem:ModelRelationAllowsAnnotation}
        Let \htModels{\htPreConVar}{\cmdVar}{\htPostConVar}, \assModels{\rbPreVar}{\htPreConVar} and \complete{\rbPreVar}.
        Furthermore, let \tpRSeq{\N} be fair with
        $\tpVar_0 = \setOf{(\tidVar, \cmdVar;\contDone)}$.
        There exists a fair annotated reduction sequence \rtpRSeq{\N} with
        $\rtpVar_0 = \setOf{(\tidVar, (\rbPreVar, \cmdVar;\contDone))}$.
    \end{restatable}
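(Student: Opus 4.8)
The plan is to run the entire construction through the coinductive safety predicate (Definition~\ref{def:Safety}), invoking the hypothesis \htModels{\htPreConVar}{\cmdVar}{\htPostConVar} only once, to seed it. First I would extract \isSafe{\rbPreVar}{\cmdVar;\contDone}. Instantiating Definition~\ref{def:HoareTripleModelRelation} with the empty frame $\frbVar = (\emptyset,0)$ and the continuation $\contVar = \contDone$, the premise of the implication asks that \isSafe{\rbPostVar}{\contDone} hold for every \rbPostVar with \assModels{\rbPostVar}{\htPostConVar}. This is unconditionally true: for complete bundles it is Observation~\ref{observation:DoneSafe}, and for incomplete bundles it holds vacuously, since the defining equation of safety guards its body by \complete{\rbVar}. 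As $(\emptyset,0)$ is a unit for $\tupleCup$, the conclusion then yields \isSafe{\rbPreVar}{\cmdVar;\contDone}, and \rbPreVar is complete by hypothesis.

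Next I would construct, by recursion on $i$, a family of annotated ``checkpoint'' pools $\mathcal{R}_i$ maintaining the invariant that $\mathcal{R}_i$ is an annotation of $\tpVar_i$, that every bundle occurring in $\mathcal{R}_i$ is complete, and that every bundle/continuation pair in \rangeOf{\mathcal{R}_i} is safe. The base case sets $\mathcal{R}_0 = \setOf{(\tidVar, (\rbPreVar, \cmdVar;\contDone))}$, whose single pair is safe by the previous step. For the recursive step the given sequence performs \tpRedStep{\tpVar_i}{\tidVar_i}{\tpVar_{i+1}}; since the bundle of $\tidVar_i$ in $\mathcal{R}_i$ is complete, the safety of its pair is non-vacuous, and applying it to this reduction with surrounding annotation $\mathcal{R}_i$ produces an intermediate pool $\mathcal{G}_i$ reachable by ghost steps of $\tidVar_i$ together with a successor $\mathcal{R}_{i+1}$ reachable from $\mathcal{G}_i$ by a single non-ghost step, with $\mathcal{R}_{i+1}$ an annotation of $\tpVar_{i+1}$. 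The safety clause certifies every pair in $\rangeOf{\mathcal{R}_{i+1}} \setminus \rangeOf{\mathcal{R}_i}$ (forked threads, and the stepped thread when its new state is genuinely new); the remaining pairs are unchanged and safe by the induction hypothesis (this covers, e.g., a loop iteration, where $\tidVar_i$ returns to an already-present pair, and normal termination, where $\tidVar_i$ is simply dropped). Completeness is preserved throughout, as \GSObCredIntroName and \GSObCredCancelName keep a single obligations chunk and the annotated fork rule splits one complete chunk into two. Concatenating for each $i$ the finite path $\mathcal{R}_i \grtpMultRedStepSymb \mathcal{G}_i \ngrtpRedStepSymb \mathcal{R}_{i+1}$, all of whose edges are instances of \rtpRedStepSymb, and re-indexing by \N gives an annotated reduction sequence \rtpRSeq{\N} with $\rtpVar_0 = \mathcal{R}_0 = \setOf{(\tidVar, (\rbPreVar, \cmdVar;\contDone))}$. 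Since \tpRSeq{\N} is infinite it never executes \cmdExit nor empties the pool, so the construction never stalls.

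The step I expect to be the \emph{main obstacle} is transferring fairness across this non-uniform expansion, since one unannotated step becomes a finite block of ghost steps plus one real step, so the two index sets no longer align. I would record the position $N_i$ at which block $i$ begins and observe that for $N_i \le n < N_{i+1}$ the domain of $\rtpVar_n$ equals \domOf{\tpVar_i}, because ghost steps leave the domain untouched and the only domain-changing (real) step sits at the block boundary. Then for any $n$ and any $t \in \domOf{\rtpVar_n}$, choosing the block $i$ containing $n$ gives $t \in \domOf{\tpVar_i}$; fairness of \tpRSeq{\N} supplies some $j \ge i$ with \tpRedStep{\tpVar_j}{t}{\tpVar_{j+1}}, whose mimicking block ends in a non-ghost step by $t$ at an annotated position $\ge n$ (as $N_{j+1}-1 \ge N_{i+1}-1 \ge n$). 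This is exactly the witness demanded by the fairness clause of the annotated semantics, and the same argument subsumes threads removed via \RARedTPThreadTermName, since such a removal is itself the fairly scheduled real step promised for that thread.
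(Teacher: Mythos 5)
Your proposal is correct and follows essentially the same route as the paper's own proof: extract \isSafe{\rbPreVar}{\cmdVar;\contDone} from the model relation via Observation~\ref{observation:DoneSafe} with an empty frame, then inductively build the annotated sequence block by block using the safety predicate, and transfer fairness by noting each plain step yields a later non-ghost annotated step. Your treatment of the completeness invariant and of the block-index bookkeeping for fairness is more explicit than the paper's, but these are elaborations of the same argument rather than a different approach.
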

    
    \begin{proof}
        According to the definition of \htModelsSymb (cf. Definition~\ref{def:HoareTripleModelRelation}) the following implication holds:
        $$
        \begin{array}{l}
            \forall \frbVar.\
            \forall \contVar.\
                    (\forall \rbPostVar.\
                        \assModels{\rbPostVar}{\htPostConVar}
                        \ \rightarrow\ 
                        \isSafe{\rbPostVar \tupleCup \frbVar}{\contVar})
                    \\
                    \phantom{\forall \frbVar.\
                                        \forall \contVar.\ }
                    \rightarrow\ 
                    (\forall \rbPreVar^*.\
                            \assModels{\rbPreVar^*}{\htPreConVar}
                            \ \rightarrow \
                            \isSafe{\rbPreVar^* \tupleCup \frbVar}{\,\cmdVar\, ; \contVar})
        \end{array}
        $$
        We can instantiate this to
        $$
        \begin{array}{l}
            (\forall \rbPostVar.\
             \assModels{\rbPostVar}{\htPostConVar}
             \ \rightarrow\ 
             \isSafe{\rbPostVar}{\contDone})
             \\
             \rightarrow\ 
             (\forall \rbPreVar^*.\
                    \assModels{\rbPreVar^*}{\htPreConVar}
                    \ \rightarrow \
                    \isSafe{\rbPreVar^*}{\,\cmdVar\, ; \contDone}).
          \end{array}
        $$
        According to Observation~\ref{observation:DoneSafe}, \isSafe{\rbPostVar}{\contDone} holds for any \completeTerm \rbPostVar.
        Hence, the implication's precondition holds trivially and we get
        $$
        \forall \rbPreVar^*.\
                \assModels{\rbPreVar^*}{\htPreConVar}
                \ \rightarrow \
                \isSafe{\rbPreVar^*}{\,\cmdVar\, ; \contDone}
        $$
        and in particular \isSafe{\rbPreVar}{\cmdVar; \contDone}.
        
        In the following, we construct the witness \rtpRSeq{\N} inductively from the unannotated reduction sequence \tpRSeq{\N}.
        In the lemma, we already defined the reduction sequence's start
        $\rtpVar_0 = \setOf{(\tidVar,\, (\rbPreVar, \cmdVar;\contDone))}$.
        
        Assume we got an annotation \rtpRSeq{\setOf{0,...,m}} of the prefix \tpRSeq{\setOf{0,...,n}} where \isSafe{\rbVar^*}{\contVar^*} holds for every
        $(\rbVar^*, \contVar^*) \in \rangeOf{\rtpVar_m}$.
        Note that $m \geq n$, since the annotated prefix might contain ghost steps.
        
        Let $\tidVar_n$ be the thread ID corresponding to the continuation reduced in reduction step $n$, i.e., \tpRedStep{\tpVar_n}{\tidVar_n}{\tpVar_{n+1}}.
        There exist $\rbVar_{m}$ and $\contVar_{m}$ with $\rtpVar_{m}(\tidVar_n) = (\rbVar_{m}, \contVar_{m})$.
        By the assumption \isSafe{\rbVar_m}{\contVar_m} we get the existence of an annotated thread pool $\nextRtpVar$ with \isResAnnot{\nextRtpVar}{\tpVar_{n+1}}, $\nextRtpVar(\tidVar_n) = (\nextRbVar, \nextContVar)$ and also \isSafe{\nextRbVar}{\nextContVar} for some $\nextRbVar, \nextContVar$.
                
        According to \isSafe{\rbVar_m}{\contVar_m} there exists \grtpVar with
        \grtpMultRedStep{\rtpVar}{\tidVar}{\grtpVar}
        and
        \ngrtpRedStep{\grtpVar}{\tidVar}{\nextRtpVar}.
        That is, we get the existence of a (potentially empty) sequence of thread pools
        $\rtpVar_{m+1},...,\rtpVar_{m+h}$ corresponding to \grtpMultRedStep{\rtpVar}{\tidVar}{\grtpVar}.
        By setting $\rtpVar_{m+h+1} := \nextRtpVar$ we obtain an annotation \rtpRSeq{\setOf{0,...,m+h+1}} of the extended prefix \tpRSeq{\setOf{0,...,n+1}}.
        We get \isSafe{\rbVar^*}{\contVar^*} for all 
        $(\rbVar^*, \contVar^*) \in \rangeOf{\rtpVar_{m+h+1}}$.

        By induction we get the claimed annotated reduction sequence \rtpRSeq{\N} with
        $\rtpVar_0 = \setOf{(\tidVar,\, (\rbPreVar, \cmdVar;\contDone))}$.
        By construction of \rtpRSeq{\N}, there exists an annotated reduction step \ngrtpRedStep{\rtpVar_j}{\tidVar_i}{\rtpVar_{j+1}} for every step \tpRedStep{\tpVar_i}{\tidVar_i}{\tpVar_{i+1}} in the plain sequence.
        Hence, the construction preserves the fairness of \tpRSeq{\N}.
    \end{proof}

    As next step, we show that an annotated reduction sequence such as \rtpRSeq{\N} constructed above, cannot start with initial resources of the form 
    $\rbPreVar = (\multiset{\obVar_0}, 0)$.
    We do this by analysing the program order graph 
    $\progOrdGraph(\rtpRSeq{\N})$
    defined in the following.
    In this graph, every node $i$ represents the $i^\text{th}$ reduction step of the sequence, i.e.,
    \rtpRedStep{\rtpVar_i}{\tidVar_i}{\rtpVar_{i+1}}.
    Edges have the form 
    $(i, \tidVar_j, n, j)$.
    Such an edge expresses that either
    (i)~\rtpRedStep{\rtpVar_j}{\tidVar_j}{\rtpVar_{j+1}} is the first step of a thread forked in step $i$
    or
    (ii)~$j$ is the next index representing a reduction of thread $\tidVar_i$ (in which case $\tidVar_i = \tidVar_j$ holds).
    In both cases, $n$ represents the name of the reduction rule applied in step
    \rtpRedStep{\rtpVar_j}{\tidVar_j}{\rtpVar_{j+1}}.

    \begin{mydefinition}[Program Order Graph]\noindent\\
        Let $\rtpVar_0 = \rtp{\tidVar_0}{\rbVar_0}{\contVar_0}$ be an annotated thread pool and \rtpRSeq{I} an annotated reduction sequence.
        Furthermore, let \RRedRuleNameSet be the set of names referring to reduction rules defining the relations \ngrtpRedStepSymb, \grtpRedStepSymb and \rstRedStepSymb.
        
        Below, we define the \emph{program order graph} $\progOrdGraph(\rtpRSeq{\N}) = (\N, E)$ for $\rtpRSeq{\N}$ where $E \subseteq \N \times \N \times \RRedRuleNameSet \times \N$.
        We define the set of edges $E$ as the smallest set meeting the following requirements:
        
        Let $i, j \in \N$ be indices denoting reduction steps \rtpRedStep{\rtpVar_i}{\tidVar_i}{\rtpVar_{i+1}} and \rtpRedStep{\rtpVar_j}{\tidVar_j}{\rtpVar_{j+1}} for some thread IDs $\tidVar_i, \tidVar_j \in \N$.
        Let $n \in \RRedRuleNameSet$ be the name of the reduction rule applied during step \rtpRedStep{\rtpVar_i}{\tidVar_i}{\rtpVar_{i+1}} in the following sense:
        \begin{itemize}
            \item In case the step is an application of reduction rule \RARedTPLiftName in combination with some single thread reduction rule $n_{\fixedPredNameFont{st}}$, then $n = n_{\fixedPredNameFont{st}}$.
            \item Otherwise, $n$ denotes the applied (real or ghost) thread pool reduction rule.
        \end{itemize}
    
        \noindent
        With these choices, $(i, \tidVar_j, n, j) \in E$ holds if one of the following holds:
        \begin{itemize}
            \item $\tidVar_i = \tidVar_j$ and 
                    $j = \min(\{  l\in\N \ | \ l > i\ \wedge\ \rtpRedStep{\rtpVar_l}{\tidVar_j}{\rtpVar_{l+1}}  \})$
            \item $\tidVar_i \neq \tidVar_j$,\
                    $\domOf{\rtpVar_{i+1}} \setminus \domOf{\rtpVar_i} = \setOf{\tidVar_j}$
                    and\\
                    $j = \min(\{ l \in \N\ |\ l > i\ \wedge\ \rtpRedStep{\rtpVar_l}{\tidVar_j}{\rtpVar_{l+1}}\})$,
        \end{itemize}
        
        \noindent
        We define node 0 to be the root of the program order graph. In case the choice of reduction sequence is clear from the context, we write \progOrdGraph instead of $\progOrdGraph(\rtpRSeq{\N})$.
    \end{mydefinition}
    
\NewDocumentCommand{\exProg}{}
        {\ensuremath{\cmdVar_\text{ex2}}\xspace}

    Consider the program 
    $$
        \exProg = 
                \cmdFork
                [
                        (\cmdFork[\cmdLoop[\cmdSkip]];
                        \cmdExit)
                ];
                \cmdLoop[\cmdSkip]
    $$
    It forks a thread and busy-waits for it to \cmdExit the program.
    In turn, the forked thread forks another thread that busy-waits as well, before it abruptly terminates the program.
    Figure~\ref{fig:ExampleReductionSequence} presents one possible example reduction sequence and its program order graph.
    
    \begin{figure}
            $$
            \hspace{-4cm}
            \begin{array}{l | l  l | l}
                    \text{ID}
                            &\text{Thread pool}
                            &
                            &\text{Applied reduction rule}\\
                    0
                            &\{ (0, (\rtpEmpty, \exProg)) \} 
                            &\grtpRedStepSymb[0]
                            &\proofRuleHint{\GSObCredIntroName}\\
                    1
                            &\{ (0, ((\multiset{1}, 1), \exProg)) \}
                            &\ngrtpRedStepSymb[0]
                            &\proofRuleHint{\RARedTPLiftName} +
                                    \proofRuleHint{\RARedSTForkName}\\
                    2
                            &\{ (0, ((\multiset{0}, 1), \cmdLoop[\cmdSkip])),\ \
                                        (1, ((\multiset{1}, 0), 
                                                    \cmdFork[\cmdLoop[\cmdSkip]]; \cmdExit))
                                 \}
                            &\grtpRedStepSymb[1]
                            &\proofRuleHint{\GSObCredIntroName}\\
                    3
                            &\{ (0, ((\multiset{0}, 1), \cmdLoop[\cmdSkip])),\ \
                                        (1, ((\multiset{2}, 1), 
                                                    \cmdFork[\cmdLoop[\cmdSkip]]; \cmdExit))
                                 \}
                            &\ngrtpRedStepSymb[1]
                            &\proofRuleHint{\RARedTPLiftName} +
                                    \proofRuleHint{\RARedSTForkName}\\        
                    4
                            &\{ (0, ((\multiset{0}, 1), \cmdLoop[\cmdSkip])),\ \
                                        (1, ((\multiset{2}, 0), \cmdExit)),\ \
                                        (2, ((\multiset{0}, 1), \cmdLoop[\cmdSkip]))
                                 \}
                            &\ngrtpRedStepSymb[2]
                            &\proofRuleHint{\RARedTPLiftName} +
                                    \proofRuleHint{\RARedSTLoopName}\\    
                    5
                            &\{ (0, ((\multiset{0}, 1), \cmdLoop[\cmdSkip])),\ \
                                        (1, ((\multiset{2}, 0), \cmdExit)),\ \
                                        (2, ((\multiset{0}, 1), \cmdLoop[\cmdSkip]))
                                 \}
                            &\ngrtpRedStepSymb[0]
                            &\proofRuleHint{\RARedTPLiftName} +
                                    \proofRuleHint{\RARedSTLoopName}\\    
                    6
                            &\{ (0, ((\multiset{0}, 1), \cmdLoop[\cmdSkip])),\ \
                                        (1, ((\multiset{2}, 0), \cmdExit)),\ \
                                        (2, ((\multiset{0}, 1), \cmdLoop[\cmdSkip]))
                                 \}
                            &\ngrtpRedStepSymb[0]
                            &\proofRuleHint{\RARedTPLiftName} +
                                    \proofRuleHint{\RARedSTLoopName}\\    
                    7
                            & \quad\quad\quad\quad\quad \quad\quad\quad\quad\quad
                                     \quad\quad\quad\quad\quad
                                     \vdots
                            &\ngrtpRedStepSymb[0]
                            &\proofRuleHint{\RARedTPLiftName} +
                                    \proofRuleHint{\RARedSTLoopName}\\    
                    \vdots
                            &  \quad\quad\quad\quad\quad \quad\quad\quad\quad\quad
                                    \quad\quad\quad\quad\quad
                                    \vdots
                            &
                            &\phantom{\proofRuleHint{\RARedTPLiftName}}\ \vdots
            \end{array}
            $$

            \begin{centering}
                    \includegraphics[scale=1.20]{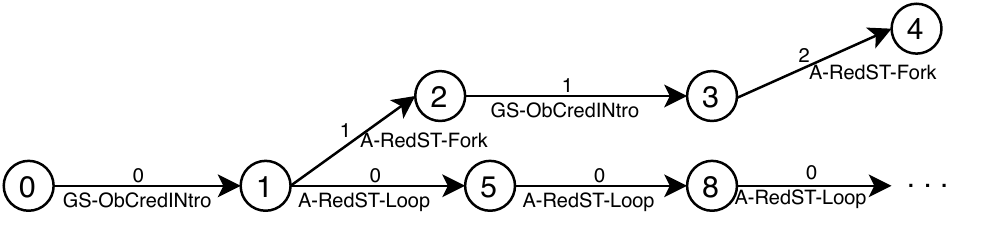}
            \end{centering}
    
            \caption{Possible reduction sequence and program order graph of     
                    $
                    \exProg = 
                            \cmdFork
                            [
                                    (\cmdFork[\cmdLoop[\cmdSkip]];
                                    \cmdExit)
                            ];
                            \cmdLoop[\cmdSkip]
                    $.
                    Threads with ID \tidVar, resource bundle \rbVar and continuation \contVar are depicted as a tuple $(\tidVar, (\rbVar, \contVar))$.
                    }
            \label{fig:ExampleReductionSequence}
    \end{figure}

    For any program order graph \progOrdGraph, we call a subgraph $\progOrdGraph_s$ of \progOrdGraph sibling-closed if, for each node $n$ of the subgraph, $n$'s predecessors' successors are also in the subgraph. 
    In other words, for each fork step node, either both the next step of the forking thread and the first step of the forked thread are in the subgraph, or neither are.
    
    Sibling-closed prefixes of program order graphs have the special property that the sum of the obligations held by the threads reduced in their leaves equals the sum of the credits held by these threads.
    This property forms a crucial part of our soundness argument, as we see in the following.
    
    \begin{mydefinition}[Sibling-Closed Subgraph]
            Let $\progOrdGraph$ be a program order graph and $\progOrdGraph_s$ a subgraph of \progOrdGraph.
            For a node $n \in \progOrdGraph_s \setminus \setOf{0}$, let 
            $p_n$ be the predecessor of $n$ in \progOrdGraph, i.e., the node for which $\theta, m$ exist such that
            $(p_n, \theta, m, n) \in \edgesOf{\progOrdGraph}$.
            Further, let
            $S_n = 
                \setOf{j  \ \ | \ \ 
                    \exists \tidVar. \exists m.\ (p_n, \tidVar, m, j) \in \edgesOf{\progOrdGraph}}
            $
            be the set containing $n$ and $n$'s siblings in \progOrdGraph.

            We call $\progOrdGraph_s$ \emph{sibling-closed} if for all 
            $n \in \nodesOf{\progOrdGraph_s}$
            and all
            $s \in S_n$, it holds that
            $s \in \progOrdGraph_s$.
    \end{mydefinition}

    \begin{restatable}{mylemma}{lemLeafObsEqLeafCreds}\label{lem:LeafObsEqLeafCreds}
            Let $\progOrdGraph_p$ be a finite 
            sibling-closed
            prefix of some program order graph $\progOrdGraph(\rtpRSeq{\N})$
            with $\rtpVar_0 = \setOf{(\tidVar_0, (\rbVar_0, \contVar_0))}$
            for some \completeTerm $\rbVar_0$.
            Let $L$ be set of leaves of $\progOrdGraph_p$.
            For all $l \in L$ choose $\tidVar_l, \obVar_l, \credVar_l, \contVar_l$ such that \rtpRedStep{\rtpVar_l}{\tidVar_l}{\rtpVar_{l+1}} and 
            $\rtpVar_l(\tidVar_l) = ((\multiset{\obVar_l}, \credVar_l), \contVar_l)$.

            The sum of the numbers of obligations held by the threads reduced in the leaves of $\progOrdGraph_p$ equals the sum of the numbers of credits held by these threads, i.e,
            $$
                    \Sigma_{l \in L}\, \obVar_l
                    \ \ = \ \
                    \Sigma_{l \in L}\, \credVar_l
            $$
    \end{restatable}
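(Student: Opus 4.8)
The plan is to prove the stronger invariant that the \emph{difference} between the leaf-obligation sum $\sumOb{\progOrdGraph_s}$ and the leaf-credit sum $\sumCred{\progOrdGraph_s}$ is the same for every finite sibling-closed prefix $\progOrdGraph_s$ of $\progOrdGraph(\rtpRSeq{\N})$, and then to evaluate this common difference at the trivial prefix. I would argue by induction on $\cardinalityOf{\nodesOf{\progOrdGraph_s}}$, building up every finite sibling-closed prefix from the singleton $\setOf{0}$ by repeatedly \emph{expanding} a non-terminal leaf, i.e.\ replacing a leaf $l$ by the set of its successors in $\progOrdGraph$. First I would justify that this is legitimate: a node of maximal depth in any finite sibling-closed prefix is a leaf, its parent's other children are present by sibling-closedness and are themselves leaves of maximal depth, so deleting that whole sibling group yields a smaller sibling-closed prefix; hence every finite sibling-closed prefix is reachable from $\setOf{0}$ by such expansions, and each expansion preserves both the prefix property and sibling-closedness.

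For the base case $\progOrdGraph_s = \setOf{0}$ the only leaf is the root carrying $\rbVar_0 = (\multiset{\obVar_0}, \credVar_0)$, so $\sumOb{\progOrdGraph_s} - \sumCred{\progOrdGraph_s} = \obVar_0 - \credVar_0$; this vanishes precisely when the initial bundle holds as many credits as obligations, which is the case for the balanced starting bundles to which the lemma is applied (for instance $(\multiset{0},0)$). For the inductive step I expand a non-terminal leaf $l$ and do a case analysis on the rule applied at step $l$, using the locality fact that a thread's annotation is untouched by other threads' steps, so that the bundle recorded at each successor node is exactly the post-step bundle of the relevant thread. A $\GSObCredIntroName$ step gives a single successor whose bundle gains one obligation and one credit, and $\GSObCredCancelName$ a single successor losing one of each; a sequencing step and a $\RARedSTLoopName$ step (the latter firing only when the chunk holds no obligations) leave the bundle unchanged; and a $\RARedSTForkName$ step produces exactly two successors — the sibling pair — whose bundles split the parent's resources additively, so that the obligations and the credits of the two children sum to $\obVar_l$ and $\credVar_l$ respectively. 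In every case the difference $\sumOb{\progOrdGraph_s} - \sumCred{\progOrdGraph_s}$ is preserved, and sibling-closedness forces both fork successors into the prefix simultaneously, so the additive split is fully accounted for. Terminal leaves — threads removed by $\RARedTPThreadTermName$, which requires an empty obligations chunk — have no successors, are therefore never expanded, and simply persist with a fixed contribution; abrupt exit cannot arise, since it empties the pool and no infinite reduction sequence can continue past it.

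Consequently $\sumOb{\progOrdGraph_s} - \sumCred{\progOrdGraph_s}$ is constant across all finite sibling-closed prefixes and equal to its base value, giving $\sumOb{\progOrdGraph_p} = \sumCred{\progOrdGraph_p}$ as required. I expect the main obstacle to be the bookkeeping around the fork case together with the locality fact underpinning it: one must argue precisely that the two successors of a fork node are exactly the forking thread's next step and the forked thread's first step, that the bundles recorded there are the \emph{post-fork} split bundles rather than bundles altered by intervening steps of unrelated threads, and that sibling-closedness guarantees both are in the prefix. Establishing this locality property and verifying that leaf-expansion enumerates all finite sibling-closed prefixes are the two points that carry the weight; the per-rule conservation computations are then routine.
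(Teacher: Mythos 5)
Your proposal is correct and takes essentially the same route as the paper, whose entire proof is ``by induction on the size of $\progOrdGraph_p$'': you carry out exactly that induction, building each finite sibling-closed prefix by leaf expansion and checking per reduction rule that the leaf-sum difference $\sumOb{\progOrdGraph_p} - \sumCred{\progOrdGraph_p}$ is conserved. Your observation that the base case only yields the stated \emph{equality} when the initial bundle is balanced (rather than merely \completeTerm) is a fair point about the lemma's hypotheses, and your difference-invariant formulation is the right way to make the induction go through.
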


    \begin{restatable}{proof}{proofLemLeafObsEqLeafCreds}
            Proof by induction on the size of $\progOrdGraph_p$.
    \end{restatable}

    Remember that reduction sequences are infinite by definition and consider a fair annotated reduction sequence \rtpRSeq{\N}.
    The only construct in our language that introduces non-termination are loops.
    Since \rtpRSeq{\N} is fair and does not get stuck, it must contain loop steps, i.e., reduction steps that result from an application of reduction rule \RARedTPLiftName in combination with \RARedSTLoopName.
    The latter rule, however, requires a credit but forbids the looping thread to hold an obligation.
    Obligations and credits can only be spawned simultaneously.
    Hence, another thread must hold the obligation and will eventually \cmdExit.
    
    Intuitively, in the annotated semantics, reduction under fair scheduling must either get stuck or terminate.
    The following lemma makes this intuition precise.

    \begin{restatable}{mylemma}{lemNoFairAnnotatedRedSeq}\label{lem:NoFairAnnotatedReductionSequence}
            There are no fair annotated reduction sequences \rtpRSeq{\N} with
            $\rtpVar_0 = \setOf{(\tidVar_0, ((\multiset{\obVar_0}, 0), \contVar))}$
            for any $\tidVar_0, \obVar_0, \contVar$.
    \end{restatable}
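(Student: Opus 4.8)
The plan is to argue by contradiction: suppose such a fair annotated reduction sequence \rtpRSeq{\N}, starting from the single-thread pool $\setOf{(\tidVar_0, ((\multiset{\obVar_0}, 0), \contVar))}$, does exist. Since every reduction sequence is infinite, it performs infinitely many steps, and I would first extract three structural facts. (a) No step applies \RARedTPExitName, since exiting empties the pool and no further step would be possible, contradicting infiniteness. (b) Only finitely many threads are ever created, because each fork step strictly decreases the total number of \cmdFork commands occurring in the continuations of the pool, and this count is finite initially. (c) Using annotated fairness -- which forbids a thread to perform ghost steps forever -- together with the fact that every non-loop real single-thread step strictly shrinks the acting thread's continuation, each thread either reduces its continuation to \contDone in finitely many steps and is then removed by \RARedTPThreadTermName (holding, as that rule requires, an empty obligations chunk), or else eventually performs a \RARedSTLoopName step, after which its continuation never changes and it loops forever.

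Combining (a)--(c): the sequence is infinite but has only finitely many threads, and each non-looping thread contributes only finitely many steps, so at least one thread loops forever and there is a finite index $T$ beyond which every thread remaining in the pool is forever-looping. By \RARedSTLoopName, at each of its loop steps such a thread holds an empty obligations chunk together with at least one credit; moreover, since a looping thread never progresses past its loop, no further forks occur after $T$. Here the hypothesis of zero initial credits becomes essential: credits are created only jointly with a matching obligation (\GSObCredIntroName) and destroyed only jointly with one (\GSObCredCancelName), so every credit in the system must be backed by an obligation held somewhere -- yet past $T$ no live thread holds any obligation at all.

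To make this precise I would invoke Lemma~\ref{lem:LeafObsEqLeafCreds}. Fix a time $t > T$ and construct a finite sibling-closed prefix $\progOrdGraph_p$ of the program order graph whose leaves form the consistent cut of the computation at $t$: for each thread ever created, its leaf is its \RARedTPThreadTermName node if it has already terminated by $t$, and otherwise its most recent \RARedSTLoopName node before $t$. Every such leaf carries an empty obligations chunk, while each looping leaf additionally carries at least one credit. Hence the sum of obligations over the leaves is $0$, whereas the sum of credits over the leaves is at least the number of surviving looping threads and therefore strictly positive. This directly contradicts Lemma~\ref{lem:LeafObsEqLeafCreds}, which equates these two sums, so the assumed sequence cannot exist.

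The step I expect to be the main obstacle is the construction of $\progOrdGraph_p$: one must exhibit an actual sibling-closed prefix whose leaf set is exactly the intended frontier. The program order graph is a tree rooted at node $0$, so I would argue that the chosen nodes form an antichain -- termination nodes have no descendants, and a looping thread's only descendants are its own later loop steps, hence no chosen node is an ancestor of another -- and then take their downward closure and close it under siblings (for every included fork node, add the first step of the thread it forked). The delicate points are verifying that this closure is finite, that it is genuinely sibling-closed, and, crucially, that closing under siblings introduces no new leaf carrying an obligation (which holds because every thread's own frontier node already lies in the cut and carries an empty obligations chunk). By comparison, the supporting facts -- finiteness of forks, the terminate-or-loop dichotomy, and the absence of obligations at removed \contDone threads -- are comparatively routine arguments by induction on the syntax and on the fairness condition.
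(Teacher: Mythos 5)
Your argument is correct and reaches the same contradiction as the paper, but by a genuinely different route. The paper also works with a finite sibling-closed prefix of the program order graph and invokes Lemma~\ref{lem:LeafObsEqLeafCreds}, but it chooses the \emph{maximal loop-edge-free} sibling-closed prefix \progOrdGraphLF, picks a single loop leaf of minimal depth (which by \RARedSTLoopName holds a credit and no obligations), concludes from Lemma~\ref{lem:LeafObsEqLeafCreds} that some \emph{other} leaf must hold an obligation, and then argues that this obligation-bearing leaf can be neither a loop step nor a termination step (both forbid obligations) and must therefore be an \cmdExit step --- which empties the pool and makes the sequence finite, a contradiction. You instead rule out \cmdExit up front, establish the global facts that only finitely many threads exist and that each one either terminates or loops forever, and then build a ``consistent cut'' prefix at a late time all of whose leaves are termination or loop nodes; there the obligation sum is $0$ while the credit sum is positive, contradicting Lemma~\ref{lem:LeafObsEqLeafCreds} directly and with no case analysis on leaf kinds. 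What each approach buys: the paper's prefix needs only one loop leaf and no global stabilisation argument, but its finiteness and the classification of its leaves are themselves non-trivial (a K\H{o}nig-style argument the paper elides); your cut is manifestly finite once your facts (b) and (c) are in place, at the price of the frontier construction you rightly flag as the delicate step. That step does go through: since every thread's frontier node lies at or after all of that thread's fork steps, the downward closure of the frontier is already sibling-closed and its leaves are exactly the frontier nodes, so closing under siblings adds nothing. The only point to tighten is that $t$ must additionally be chosen large enough that every surviving thread has performed at least one loop step before $t$, so that ``most recent loop node before $t$'' is well defined; this is easily arranged since there are finitely many threads.
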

    \begin{proof}
    {
        \NewDocumentCommand{\varSpecFont}{m}{\fixedPredNameFont{#1}}
        
        \NewDocumentCommand{\nodeVar}{}{\ensuremath{v}\xspace}
        
        \NewDocumentCommand{\loopNode}{}
                {\ensuremath{\nodeVar_\varSpecFont{l}}\xspace}
        \NewDocumentCommand{\loopTid}{}
                {\ensuremath{\tidVar_\varSpecFont{l}}\xspace}
        \NewDocumentCommand{\loopRb}{}
                {\ensuremath{\rbVar_\varSpecFont{l}}\xspace}
        \NewDocumentCommand{\loopCont}{}
                {\ensuremath{\contVar_\varSpecFont{l}}\xspace}
        
        \NewDocumentCommand{\spawnNode}{}
                {\ensuremath{\nodeVar_\varSpecFont{s}}\xspace}
        \NewDocumentCommand{\spawnTid}{}
                {\ensuremath{\tidVar_\varSpecFont{s}}\xspace}
        
        \NewDocumentCommand{\exitNode}{}
                {\ensuremath{\nodeVar_\varSpecFont{e}}\xspace}
        \NewDocumentCommand{\exitTid}{}
                {\ensuremath{\tidVar_\varSpecFont{e}}\xspace}
        \NewDocumentCommand{\exitRb}{}
                {\ensuremath{\rbVar_\varSpecFont{e}}\xspace}
        \NewDocumentCommand{\exitCont}{}
                {\ensuremath{\contVar_\varSpecFont{e}}\xspace}

        Assume such a reduction sequence exists.       
        Consider the program order graph \progOrdGraph of $(\rtpVar_i)_{i\in\N}$. 
        Since, $(\rtpVar_i)_{i\in\N}$ is infinite, it contains loop-steps, i.e., reduction steps $\rtpRedStep{\rtpVar_i}{\tidVar}{\rtpVar_{i+1}}$ resulting from applications of \RARedTPLiftName in combination with \RARedSTLoopName. 
        Accordingly, \progOrdGraph is infinite, too, and contains loop-edges of the form $(i, \tidVar, \RARedSTLoopName, j)$.
        
        Let \progOrdGraphLF be its maximal loop-edge-free sibling-closed prefix.
        By analyzing the leaves of \progOrdGraphLF, we prove that \progOrdGraph and \rtpRSeq{\N} are finite, which contradicts our original assumption about the existence of a fair reduction sequence \tpRSeq{\N}.
        
        Let $\loopNode\in \nodesOf{\progOrdGraph}$ be a node of the full graph for which an edge of the form $(\loopNode, \loopTid, \RARedSTLoopName, j) \in \edgesOf{\progOrdGraph}$ exists and for which the length of the path from root 0 to \loopNode is minimal. 
        That is, said path does not contain any loop-edges. Therefore, \loopNode is a leaf of the prefix graph \progOrdGraphLF and \rtpRedStep{\rtpVar_{\loopNode}}{\loopTid}{\rtpVar_{\loopNode+1}} is a loop-step.
        
        According to reduction rule \RARedSTLoopName, there exist $\loopRb, \loopCont$ with $\rtpVar_{\loopNode}(\loopTid) = (\loopRb, \loopCont)$ and \assModels{\loopRb}{\credit}. 
        Additionally, the rule also requires resource bundle \loopRb to contain no obligations.
        According to Lemma~\ref{lem:LeafObsEqLeafCreds}, there exists a leaf
        $\exitNode \in \leavesOf{\progOrdGraphLF} \setminus \setOf{\loopNode}$,
        a thread ID 
        $\exitTid\in \N\setminus \setOf{\loopTid}$,
        a continuation
        $\exitCont\in \ContSet$ 
        and a resource bundle $\exitRb\in \ResourceBundleSet$
        containing the obligation such that $\rtpVar_{\exitNode}(\exitTid) = (\exitRb, \exitCont)$.
  
        Generally, there are three possible reasons why some node $\nodeVar \in \nodesOf{\progOrdGraph}$ is a leaf of \progOrdGraphLF, all of which depend on the associated reduction step \rtpRedStep{\rtpVar_{\nodeVar}}{\tidVar_\nodeVar}{\rtpVar_{\nodeVar+1}}:
        (i)~Because it is a loop-step, 
        (ii)~because it is a thread termination step, i.e.,
            $\rtpVar_{\nodeVar}(\tidVar_\nodeVar) = (\rbVar_\nodeVar, \contDone)$
            and
            $\rtpVar_{\nodeVar+1} = 
                \rtpVar_{\nodeVar} \rtpRem \tidVar_\nodeVar$
        or
        (iii)~because it is an abrupt exit step.

        However, (i)~\rtpRedStep{\rtpVar_{\exitNode}}{\exitTid}{\rtpVar_{\exitNode+1}} cannot be a loop-step according to \RARedSTLoopName, since it prohibits \exitRb to hold any obligations.
        Similarly, (ii)~\rtpRedStep{\rtpVar_{\exitNode}}{\exitTid}{\rtpVar_{\exitNode+1}} cannot be a thread termination step either, since reduction rule \RARedTPThreadTermName also  prohibits \exitRb to hold any obligations.
        
        Therefore, the only possibility is (iii), i.e., \rtpRedStep{\rtpVar_{\exitNode}}{\exitTid}{\rtpVar_{\exitNode+1}} abruptly exits the program and $\rtpVar_{\exitNode+1} = \rtpEmpty$. 
        Thereby, the program order graph \progOrdGraph and the annotated reduction sequence \rtpRSeq{\N} must be finite.
    }
    \end{proof}

    The combination of Lemmas~\ref{lem:SoundnessHoareTriples},
    \ref{lem:ModelRelationAllowsAnnotation} and~\ref{lem:NoFairAnnotatedReductionSequence} 
    allow a straight-forward proof of the soundness theorem.

    \SoundnessTheorem*
    \begin{restatable}{proof}{proofSoundnessTheorem}
        Assume there exists a fair reduction sequence $(\tpVar_i)_{i\in \N}$ starting with $\tpVar_0 = \setOf{(\tidVar_0, \cmdVar;\contDone)}$.
        By application of the Soundness Lemma for Hoare triples, Lemma~\ref{lem:SoundnessHoareTriples}, we get \htModels{\obs{n}}{\cmdVar}{\noObs}. 
        By Lemma~\ref{lem:ModelRelationAllowsAnnotation}, there exists a fair annotated reduction sequence \rtpRSeq{\N} starting with 
        $\rtpVar_0 = \setOf{(\tidVar_0, ((\multiset{n}, 0), \cmdVar; \contDone))}$
        
        However, by Lemma~\ref{lem:NoFairAnnotatedReductionSequence} the annotated reduction sequence \rtpRSeq{\N} does not exist and consequently neither does \tpRSeq{\N}.
    \end{restatable}

\NewDocumentCommand{\eventThread}{o}
        {\ensuremath{
                t_e
                \IfValueT{#1}{^{#1}}
        }\xspace}
\NewDocumentCommand{\waitThread}{o}
        {\ensuremath{
                t_w
                \IfValueT{#1}{^{#1}}
        }\xspace}        
    
\section{Future Work}\label{sec:FutureWork}
    We are currently formalizing the presented approach and its soundness proof in Coq.

    \paragraph{Ghost Signals}
    We plan to extend the verification approach described in this paper to a verification technique we call \emph{ghost signals}, which allows us to verify termination of busy-waiting for arbitrary events.
    Ghost signals come with an obligation to set the signal and a credit that can be used to busy-wait for the signal to be set.
    Consider a program with two threads: \eventThread eventually performs some event $X$  (such as setting a flag in shared memory) and \waitThread  busy-waits for $X$.
    By letting \eventThread  set the signal when it performs $X$, and thereby linking the ghost signal to the event, we can justify termination of \waitThread's busy-waiting.

    \paragraph{I/O Liveness as Abrupt Program Exit}
    In concurrent work we encode I/O liveness properties as abrupt program termination following a conjecture of \citet{Jacobs2018ModularTerminationVerification}.
    Consider a \emph{non-terminating} server \serverVar which shall reply to all requests.
    We can prove liveness of \serverVar using the following methodology:
    \begin{itemize}
        \item For some arbitrary, but fixed $N$, assume that responding to the $N^{\text{th}}$ request abruptly terminates the whole program.
        \item Prove that the program always terminates.
    \end{itemize}

    \begin{figure}
          \begin{center}
                \includegraphics[scale=0.86]{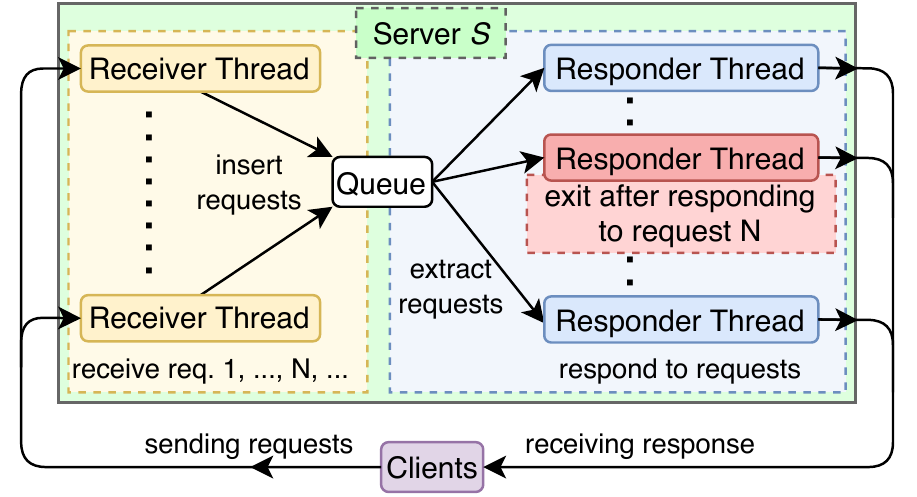}
            \end{center}
        
        \vspace{-0.3cm}
        \caption
            {Server \serverVar receiving and replying to requests.
             Threads communicating via shared queue.}
        \label{fig:ServerMultipleThreads}
    \end{figure}

    One can combine this approach with the one of the present paper to verify liveness of a server where multiple threads independently receive and handle requests. 
    Using a \emph{prophecy variable}~\cite{Jung2020POPLProphecyVariables}, one can determine ahead of time which thread will receive the exiting request. 
    The other threads can then be seen as busy-waiting for this thread to exit.

\paragraph{Combination}

    We plan to combine the two approaches sketched above and conjecture that the combination will be expressive enough to verify liveness of programs such as the server \serverVar presented in Figure~\ref{fig:ServerMultipleThreads}.
    It runs a set of receiving and a set of responding threads communicating via a shared queue.
    The responding threads busy-wait for requests to arrive in the queue.
    In order to verify liveness of \serverVar, we need to show that some thread
    eventually abruptly terminates the program by acquiring and responding to the $N^\text{th}$ request.
    This requires us to prove that the responding threads' busy-waiting for requests to arrive in the shared queue terminates.
    
    In order to demonstrate the approach's usability, we plan to implement it in VeriFast~\cite{Jacobs2011Verifast} and prove liveness of \serverVar.

\section{Related Work}\label{sec:RelatedWork}
\NewDocumentCommand{\TadaLive}{}{TaDA Live\xspace}
\NewDocumentCommand{\Lili}{}{LiLi\xspace}

    \citet{Liang2016LiliAPL, Liang2017LiliProgressOC} propose \Lili, a separation logic to verify liveness of blocking constructs implemented via busy-waiting.
    In contrast to our verification approach, theirs is based on the idea of contextual refinement.
    \Lili does not support forking nor structured parallel composition.

    \citet{DOsualdo2019arxivTaDALive} propose \TadaLive, a separation logic for verifying termination of busy-waiting.
    This logic allows to modularly reason about fine-grained concurrent programs and blocking operations that are implemented in terms of busy-waiting and non-blocking primitives.
    It uses the concept of obligations to express thread-local liveness invariants, e.g., that a thread eventually releases an acquired lock.
    \TadaLive is expressive enough to verify CLH and spin locks.
    The current version supports structured parallel composition instead of unstructured forking.
    Comparing their proof rules to ours, it is fair to say that our logic is simpler. 
    Of course, theirs is much more powerful. 
    We hope to be able to extend ours as sketched above while remaining simpler.

\section{Conclusion}\label{sec:Conclusion}
    In this paper we proposed a separation logic to verify the termination of programs where some threads abruptly terminate the program and others busy-wait for abrupt termination.
    We proved our logic sound and illustrated its application.

    Abrupt termination can be understood as an approximation of a general event.
    We outlined our vision on how to extend our approach to verify the termination of busy-waiting for arbitrary events.
    We have good hope that the final logic will be as expressive as \TadaLive proposed by \citet{DOsualdo2019arxivTaDALive} while remaining conceptually simpler.

    Further, we sketched our vision to combine this extended work with concurrent work on the encoding of I/O liveness properties as abrupt termination.
    We illustrated that this combination will be expressive enough to verify liveness of concurrent programs where multiple threads share I/O responsibilities.

\bibliographystyle{plainnat}
\bibliography{bibliography}

\appendix

\end{document}